\def\colorful{0}
\newif\ifhyper\IfFileExists{hyperref.sty}{\hypertrue}{\hyperfalse}
\ifhyper\usepackage{hyperref}\fi
\def\nnewcolor{1}
\newcommand{\new}[1]{{\color{red} #1}}
\newcommand{\new}[1]{{#1}}
\newtheorem{theorem}{Theorem}[section]
\newtheorem{lemma}[theorem]{Lemma}
\newtheorem{proposition}[theorem]{Proposition}
\newtheorem{fact}[theorem]{Fact}
\newtheorem{remark}[theorem]{Remark}
\newtheorem{definition}[theorem]{Definition}
\newcommand{\vol}{{\mathrm{vol}}}
\newcommand{\R}{\mathbb{R}}
\newcommand{\Z}{\mathbb{Z}}
\newcommand{\E}{\mathbb{E}}
\newcommand{\poly}{\mathrm{poly}}
\newcommand{\polylog}{\mathrm{polylog}}
\newcommand{\dtv}{d_{\mathrm TV}}
\newcommand{\ignore}[1]{}
\newcommand{\eps}{\epsilon}
\newcommand{\Poi}{\mathrm{Poi}}
\newcommand{\etens}[1]{\otimes{#1}}
\newcommand{\eqdef}{\stackrel{{\mathrm {\footnotesize def}}}{=}}
\title{Testing Identity of Multidimensional Histograms}
\author{Ilias Diakonikolas
\thanks{Supported by NSF Award CCF-1652862 (CAREER) and a Sloan Research Fellowship.}\\
University of Southern California\\
{\tt iliasdiakonikolas@gmail.com}\\
\and
Daniel M. Kane\thanks{Supported by NSF Award CCF-1553288 (CAREER) and a Sloan Research Fellowship.}\\
University of California, San Diego\\
{\tt dakane@cs.ucsd.edu}\\
\and
John Peebles
\thanks{Supported by the NSF Graduate Research Fellowship under Grant 1122374, and by the NSF Grant 1065125. 
Some of this work was performed while visiting USC.}\\
CSAIL, MIT \\
{\tt jpeebles@mit.edu}\\
}
\begin{document}

\maketitle

\setcounter{page}{0}

\thispagestyle{empty}

\begin{abstract}
We investigate the problem of identity testing for multidimensional histogram distributions.
A distribution $p: D \to \R_+$, where $D \subseteq \R^d$, is called a {$k$-histogram} if there exists a partition of the
domain  into $k$ axis-aligned rectangles such that $p$ is constant within each such rectangle.
Histograms are one of the most fundamental nonparametric families of distributions and
have been extensively studied in computer science and statistics.
We give the first identity tester for this problem with {\em sub-learning} sample complexity
in any fixed dimension and a nearly-matching sample complexity lower bound.

In more detail, let $q$ be an unknown $d$-dimensional $k$-histogram distribution in fixed dimension $d$, 
and $p$ be an explicitly given $d$-dimensional $k$-histogram.
We want to correctly distinguish, with probability at least $2/3$, between the case that $p = q$ versus $\|p-q\|_1 \geq \eps$.
We design an algorithm for this hypothesis testing problem
with sample complexity $O((\sqrt{k}/\eps^2) 2^{d/2} \log^{2.5 d}(k/\eps))$ that runs in sample-polynomial time.
Our algorithm is robust to model misspecification, i.e., succeeds even if $q$ is only promised 
to be {\em close} to a $k$-histogram.
Moreover, for $k = 2^{\Omega(d)}$, we show a sample complexity lower bound of
$(\sqrt{k}/\eps^2) \cdot \Omega(\log(k)/d)^{d-1}$ when $d\geq 2$. 
That is, for any fixed dimension $d$, our upper and lower bounds are nearly matching.
Prior to our work, the sample complexity of the $d=1$ case was well-understood,
but no algorithm with sub-learning sample complexity was known, even for $d=2$. Our new upper and lower bounds
have interesting conceptual implications regarding the relation between learning and testing in this setting.
\end{abstract}

\newpage

\section{Introduction} \label{sec:intro}

\subsection{Background} \label{ssec:background}

The task of verifying the identity of a statistical model --- known as {\em identity testing} or {\em goodness of fit} ---
is one of the most fundamental questions in statistical hypothesis testing~\cite{Pearson1900, NeymanP}.
In the past two decades, this question has been extensively studied by the TCS and information-theory communities
in the framework of {\em property testing}~\cite{RS96, GGR98}:
Given sample access to an unknown distribution $q$ over a finite domain $[n]: = \{1, \ldots, n\}$, an explicit distribution
$p$ over $[n]$, and a parameter $\eps>0$, we want to distinguish between the cases that $q$ and $p$ are identical versus
$\eps$-far from each other in $\ell_1$-norm (statistical distance).
Initial work on this problem focused on characterizing the sample size needed to test the identity of an arbitrary
distribution of a given support size $n$. This regime is well-understood: there exists an efficient
estimator with sample complexity $O(\sqrt{n}/\eps^2)$~\cite{VV14, DKN:15, ADK15}
that is worst-case optimal up to constant factors.

The aforementioned sample complexity characterizes worst-case instances
and drastically better upper bounds may be possible if we have some a priori
qualitative information about the unknown distribution. For example, if $q$
is an {\em arbitrary} continuous distribution, no identity tester with finite sample complexity exists.
On the other hand, if $q$ is known to have some nice structure, the domain size may
not be the right complexity measure for the identity testing problem
and one might hope that strong positive results can be obtained even for the continuous setting.
This discussion motivates the following natural question: {\em To what extent can we exploit the underlying structure
to perform the desired statistical estimation task more efficiently?}

A natural formalization of the aforementioned question involves
assuming that the unknown distribution belongs to (or is close to) a given family of
distributions. Let $\mathcal{D}$ be a family of distributions over $\R^d$.
The problem of {\em identity testing for $\mathcal{D}$} is the following:
Given sample access to an unknown distribution $q \in \mathcal{D}$, and an explicit distribution $p \in \mathcal{D}$,
we want to distinguish between the case that $q = p$ versus $\|q-p\|_1 \ge \eps.$
(Throughout this paper, $\|p-q\|_1$ denotes the $L_1$-distance between the distributions $p, q$.)
We note that the sample complexity of this testing problem depends on the complexity of the underlying class
$\mathcal{D}$, and it is of fundamental interest to obtain efficient algorithms
that are {\em sample optimal} for $\mathcal{D}$. A recent body of work in distribution testing has focused on
leveraging such a priori structure to obtain significantly improved sample
complexities~\cite{BKR:04, DDSVV13, DKN:15, DKN:15:FOCS, CDKS17, DaskalakisP17, DaskalakisDK16, DKN17}.

One approach to solve the identity testing problem for a family $\mathcal{D}$ is to learn $q$ up to $L_1$-distance $\eps/3$
and then check (without drawing any more samples) 
whether the hypothesis is $\eps/3$-close to $p$. Thus, the sample complexity of identity testing for $\mathcal{D}$
is bounded from above by the sample complexity of {\em learning} (an arbitrary distribution in) $\mathcal{D}$.
It is natural to ask whether a better sample size bound could be achieved for the identity testing
problem, since this task is, in some sense,  less demanding than the task of learning. In this paper, we provide
an affirmative answer to this question for the family of multidimensional histogram distributions.

\subsection{Our Results: Identity Testing for Multidimensional Histograms} \label{ssec:results}

In this work, we investigate the problem of identity testing for multidimensional histogram distributions.
A $d$-dimensional probability distribution with density $p: D \to \R$, where $D \subset \R^d$ is either $[m]^d$ or $[0, 1]^d$,
is called a \emph{$k$-histogram} if there exists a partition of the domain
into $k$ axis-aligned rectangles $R_1, \ldots, R_k$ such that $p$ is constant on $R_i$, for all $i = 1,\ldots, k$.
We let $\mathcal{H}^d_k (D)$ denote the set of $k$-histograms over $D$.
We will use the simplified notation $\mathcal{H}^d_k$ when the underlying domain
is clear from the context.
Histograms constitute one of the most basic nonparametric distribution families and
have been extensively studied in statistics and computer science.

Specifically, the problem of learning histogram distributions from samples has been extensively studied
in the statistics community and many methods have been proposed~\cite{Scott79, FreedmanD1981, 
Scott:92, LN96, Devroye2004, WillettN07, Klem09}
that unfortunately have a strongly exponential dependence on the dimension.
In the database community, histograms~\cite{JPK+98,CMN98,TGIK02,GGI+02, GKS06, ILR12, ADHLS15}
constitute the most common tool for the succinct approximation of large datasets.
Succinct multivariate histograms representations are well-motivated in several data analysis
applications in databases, where randomness is used to subsample a large dataset~\cite{CGHJ12}.

In recent years, histogram distributions have attracted renewed interested
from the TCS community
in the context of learning~\cite{DDS12soda, CDSS13, CDSS14, CDSS14b, 
DHS15, AcharyaDLS16, ADLS17, DiakonikolasKS16a, DLS18} and
testing~\cite{ILR12, DDSVV13, DKN:15,  DKN:15:FOCS, Canonne16, CDGR16, DKN17}.
The algorithmic difficulty in learning and testing such distributions lies in the fact
that the location and size of the rectangle partition is unknown.
The majority of the literature has focused on the univariate setting
which is by now well-understood. Specifically, it is known that the sample complexity of learning
$\mathcal{H}^1_k$ is $\Theta(k/\eps^2)$ (and this sample bound is achievable with
computationally efficient algorithms~\cite{CDSS14, CDSS14b, ADLS17}); while the sample complexity of
identity testing $\mathcal{H}^1_k$ is $\Theta(\sqrt{k}/\eps^2)$~\cite{DKN:15}. That is, in one dimension, the gap between learning and
identity testing as a function of the complexity parameter $k$ is known to be quadratic.

A recent work~\cite{DLS18} obtained a sample near-optimal and computationally
efficient algorithm for {\em learning} multidimensional $k$-histograms in any fixed dimension.
The sample complexity of the~\cite{DLS18} algorithm is
$O((k/\eps^2) \log^{O(d)} (k / \eps))$ while the optimal sample complexity
of the learning problem (ignoring computational considerations) is 
$\widetilde{\Theta}(d k / \eps^2)$\footnote{We note that the $\widetilde{\Theta}()$ notation 
hides polylogarithmic factors in its argument.}.
On the other hand, the property testing question in two (or more) dimensions
is poorly understood. In particular, prior to this work, no testing algorithm with sub-learning sample complexity
was known, even for $d=2$ (independent of computational considerations). 
In this paper, we obtain an identity tester for multidimensional histograms in any fixed dimension 
with {\em sub-learning sample complexity} and establish a nearly-matching sample complexity lower bound 
(that applies even to the special case of uniformity testing). Our main result is the following:

\begin{theorem}[Main Result] \label{thm:main}
Let $\eps > 0$ and $k \in \mathbb{Z}_+$.
Let $q \in \mathcal{H}^d_k(D)$ be an unknown $k$-histogram distribution over $D =[0, 1]^d$  
or $D =[m]^d$, \new{where $d$ is fixed}, and $p \in \mathcal{H}^d_k(D)$ be explicitly given.
There is an algorithm which draws $m = O((\sqrt{k}/\eps^2) 2^{d/2} \log^{2.5 d}(dk/\eps))$ samples from $q$, 
runs in sample-polynomial time, and distinguishes, with probability at least $2/3$, 
between the case that $p = q$ versus $\|p-q\|_1 \geq \eps$.
Moreover, any algorithm for this hypothesis testing problem requires
$(\sqrt{k}/\eps^2) \Omega(\log(k)/d)^{d-1}$ samples for $k = 2^{\Omega(d)}$, even for uniformity testing.
\end{theorem}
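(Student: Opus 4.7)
The plan is to reduce multidimensional identity testing to a discrete identity testing problem on a carefully chosen partition, then invoke a known optimal discrete tester. The main ingredient is a structural partition lemma: using the explicit description of $p$ as a $k$-histogram, I would construct a partition $\Pi$ of $D$ into $N = O(k\, 2^d \log^{5d}(dk/\eps))$ axis-aligned rectangles such that (a) $p$ is constant on every cell of $\Pi$, so the flattening $p_\Pi$ equals $p$, and (b) every $k$-histogram $q$ on $D$ satisfies $\|q - q_\Pi\|_1 \le \eps/4$. I would build $\Pi$ by adaptively refining the native partition of $p$ via a dyadic decomposition in each axis up to depth $O(\log(k/\eps))$; the key combinatorial fact is that any $k$-histogram's rectangle boundaries lie on $O(k)$ axis-aligned $(d-1)$-hyperplane segments, and an inductive argument on dimension shows each such segment intersects only $O(2^d \polylog(k/\eps))$ cells. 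Then I would run a $\chi^2$-based discrete identity tester (for instance, Chan--Diakonikolas--Valiant--Valiant) on $\Pi$ with $O(\sqrt{N}/\eps^2)$ samples, distinguishing $p_\Pi = q_\Pi$ from $\|p_\Pi - q_\Pi\|_1 \ge \eps/2$. Correctness follows from the triangle inequality combined with the structural lemma: if $p = q$ the flattenings agree; if $\|p - q\|_1 \ge \eps$ then $\|p_\Pi - q_\Pi\|_1 = \|p - q_\Pi\|_1 \ge \eps - \eps/4 \ge \eps/2$. The total sample complexity becomes $O((\sqrt{k}/\eps^2)\, 2^{d/2} \log^{2.5d}(dk/\eps))$.

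\textbf{Lower bound.} The plan for the lower bound is Ingster's two-point method against a multi-scale family of $k$-histogram alternatives to uniformity. Take $p$ to be the uniform distribution on $[0,1]^d$ and construct a family $\{q_\bsigma\}_{\bsigma \in \{\pm 1\}^M}$ with $M = k \cdot \Omega(\log(k)/d)^{2(d-1)}$ such that each $q_\bsigma$ is a $k$-histogram with $\|p - q_\bsigma\|_1 = \eps$, and $q_\bsigma - p = (\eps/M)\sum_c \sigma_c \chi_{C_c}$ for pairwise disjoint cells $C_1, \ldots, C_M$ of equal volume $1/M$. The cells come from a multi-scale dyadic partition using $\Theta(\log k/d)$ dyadic scales per axis, with nesting arranged so that, regardless of the sign assignment $\bsigma$, the cells group into at most $k$ axis-aligned super-rectangles; the hypothesis $k = 2^{\Omega(d)}$ ensures enough dyadic scales per axis to realize the packing. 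A standard chi-squared computation then gives $\chi^2(\mathbb{E}_\bsigma q_\bsigma^{\otimes t}, p^{\otimes t}) \le \mathbb{E}_{\bsigma, \bsigma'}(1 + (\eps^2/M)\langle \bsigma, \bsigma'\rangle)^t - 1$, and a Chernoff-type bound on the random inner product $\langle \bsigma, \bsigma'\rangle$ shows this quantity is $o(1)$ whenever $t = o(\sqrt{M}/\eps^2)$. Le Cam's method then forbids a confident distinguisher on fewer samples, yielding the $(\sqrt{k}/\eps^2) \cdot \Omega(\log(k)/d)^{d-1}$ lower bound.

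\textbf{Main obstacles.} I expect the hard step of the upper bound to be the structural partition lemma with its $2^d \polylog(k/\eps)$ overhead per dimension: a naive uniform dyadic refinement would cost $(k/\eps)^d$ cells, so the savings depend on carefully counting only those cells that a $k$-histogram can ``cut'' at each dyadic scale. The hard step of the lower bound is the family construction itself, which must simultaneously pack $\Omega(k(\log k/d)^{2(d-1)})$ independently perturbable cells into a rectangle partition of at most $k$ pieces; this exploits multi-scale geometry unavailable in one dimension, which is precisely why the one-dimensional lower bound saturates at $\sqrt{k}/\eps^2$.
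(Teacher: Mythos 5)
Your structural partition lemma is false for $d\geq 2$, and the paper explicitly flags this as the reason a more elaborate scheme is needed: any \emph{single} obliviously chosen partition $\Pi$ of $[0,1]^2$ for which every $k$-histogram $q$ satisfies $\|q-q_\Pi\|_1\leq \eps/4$ requires $\Omega(k^2/\eps^2)$ cells, not $O(k\,2^d\polylog(k/\eps))$. To see the obstruction, consider the checkerboard histograms used in the paper's own lower bound: for each $a+b=\log_2 k$, a $2^{a}\times 2^{b}$ checkerboard with densities $1\pm\eps$ is a $k$-histogram at distance $\eps$ from uniform, and a cell of $\Pi$ retains its discrepancy only if the cell fits inside a single checkerboard square of that aspect ratio. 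A fixed cell of volume $v$ can be compatible with at most $\log_2(1/(vk))$ of these $\log k$ aspect ratios, so forcing $\Pi$ to work for all of them forces most of its volume into cells of volume $k^{-1-c}$, i.e.\ $\omega(k\,\polylog k)$ cells, which destroys your $\sqrt{N}/\eps^2$ budget. The paper's resolution is qualitatively different: it keeps $\polylog^{d}(k/\eps)$ \emph{many} overlapping partitions (a ``good oblivious covering,'' so the reduced domain can have as many as $(k/\eps)^d$ elements), splits each cell in two by $p$-density (Lemma~2.4, needed because $p(S)=q(S)$ can hold even when $\int_S|p-q|$ is large), and then replaces the standard identity tester by an $\ell_1^k$-tester (Theorem~2.3) whose sample complexity $O(\sqrt{k'}/\eps'^2)$ depends only on the number $k'=O(kj)$ of cells carrying the discrepancy and not on the domain size. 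Without that last ingredient, even the multi-partition approach would be stuck at $\sqrt{k}/\eps^{2.5}$ or worse.

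\textbf{Lower bound.} Your construction is internally inconsistent: you ask for $M=k\cdot\Omega(\log(k)/d)^{2(d-1)}\gg k$ disjoint equal-volume cells carrying \emph{independent} signs $\sigma_c$, while each $q_{\bsigma}$ must be a $k$-histogram. With independent signs, a typical $q_{\bsigma}$ needs $\Omega(M)$ rectangles to be piecewise constant, so it is not in $\mathcal{H}^d_k$; and if you correlate the signs so that the cells always merge into at most $k$ rectangles, the effective dimension of the perturbation drops back to $k$ and the Ingster calculation only yields $\sqrt{k}/\eps^2$. The paper beats $\sqrt{k}/\eps^2$ by a different mechanism that never uses more than $k$ cells: the domain is cut into $n\approx(\log k/d)^d$ regions, each filled with a checkerboard whose grid shape is drawn at random from $\binom{m+d-1}{d-1}\geq 4n$ possibilities, and the key computation is that $\chi_U(p,q)$ receives \emph{exactly zero} contribution from any region where the two draws chose different shapes (an orthogonality identity), so only a $\mathrm{Binomial}(n,\alpha)$ number of regions with $n\alpha\leq 1/4$ ever contribute. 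The gain in the exponent comes from this rarity of shape collisions, not from packing extra independently perturbable cells. You would need to replace your cell-packing step with something of this flavor for the bound to go through.
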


A few remarks are in order:
First, we emphasize that the focus of our work is on the case where the parameter $k$ is
{\em much larger} than the dimension $d$. For example, this condition is automatically satisfied when
$d$ is bounded from above by a fixed constant. We note that understanding the regime of fixed dimension $d$ 
is of fundamental importance, as it is the most commonly studied setting in nonparametric inference. Moreover,
in several of the classical database and streaming applications of multidimensional histograms 
(see, e.g.,~\cite{Poosala97, GunopulosKTD00, BrunoCG01, Muthu05} and references therein) 
the dimension $d$ is relatively small (at most $10$), while the number of rectangles is orders of magnitude larger .
For such parameter regimes, our identity tester
has sub-learning sample complexity that is near-optimal, up to the precise power of the logarithm 
(as follows from our lower bound). Understanding the parameter regime where $k$ and $d$ are comparable, 
e.g., $k = \poly(d)$, is left as an interesting open problem.

It is important to note that our identity testing algorithm is {\em robust} to model misspecification. Specifically,
the algorithm is guaranteed to succeed as long as the unknown distribution $q$ is $\eps/10$-close, in $L_1$-norm, to
being a $k$-histogram. This robustness property is important in applications and is conceptually interesting
for the following reason: In high-dimensions, robust identity testing with sub-learning sample complexity is provably impossible,
even for the simplest high-dimensional distributions, including spherical Gaussians~\cite{DiakonikolasKS16c}.

A conceptual implication of Theorem~\ref{thm:main} concerns the sample complexity gap between
learning and identity testing for histograms. It was known prior to this work that
the gap between the sample complexity of learning and identity testing for {\em univariate} $k$-histograms
is quadratic as a function of $k$. Perhaps surprisingly, our results imply that
this gap decreases as the dimension $d$ increases (as long as the dimension remains fixed). 
This follows from our sample complexity lower bound in Theorem~\ref{thm:main} and the fact that the sample complexity of
learning $\mathcal{H}^d_k$ is $\tilde{\Theta}(d k/\eps^2)$ 
(as follows from standard VC-dimension arguments, see, e.g.,~\cite{DLS18}).
In particular, even for $d=3$, the gap between the sample complexities of learning and identity testing
is already {\em sub-quadratic} and continues to decrease as the dimension increases.
(We remind the reader that our lower bound applies for $k>2^{\Omega(d)}$.)

Finally, we note here a {\em qualitative} difference between the $d=1$ and $d \geq 2$ cases.
Recall that for $d=1$ the sample complexity of identity testing $k$-histograms is $\Theta(\sqrt{k}/\eps^2)$.
For $d=2$, the sample complexity of our algorithm is $O((\sqrt{k}/\eps^2) \log^5(k/\eps))$. It would be tempting
to conjecture that the multiplicative logarithmic factor is an artifact of our algorithm and/or its analysis. Our lower bound
of $\Omega ((\sqrt{k}/\eps^2) \log(k))$ shows that some constant power of a logarithm is in fact necessary.

\subsection{Related Work}

The field of \emph{distribution property testing}~\cite{BFR+:00} has been extensively investigated in the past
couple of decades, see~\cite{Rub12, Canonne15,Gol:17}. 
A large body of the literature has focused on characterizing the sample size needed to test properties
of arbitrary discrete distributions. This regime is fairly well understood:
for many properties of interest there exist sample-efficient testers
~\cite{Paninski:08, CDVV14, VV14, DKN:15, ADK15, CDGR16, DK16, DiakonikolasGPP16, CDS17, Gol:17, DGPP17, BatuC17, DKS17-gu, CDKS18}.
More recently, an emerging body of work has focused on leveraging \textit{a priori} structure
of the underlying distributions to obtain significantly improved sample
complexities~\cite{BKR:04, DDSVV13, DKN:15, DKN:15:FOCS, CDKS17, DaskalakisP17, DaskalakisDK16, DKN17}.

The area of distribution inference under structural assumptions ---  that is, inference about a distribution
under the constraint that its probability density function satisfies certain qualitative properties ---
is a classical topic in statistics starting with the pioneering work of Grenander~\cite{Grenander:56}
on monotone distributions. The reader is referred to~\cite{BBBB:72} for a summary of the early work
and to~\cite{GJ:14} for a recent book on the subject.
This topic is well-motivated in its own right, and
has seen a recent surge of research activity in the statistics and econometrics communities,
due to the ubiquity of structured distributions in the sciences. The conventional wisdom is that,
under such structural constraints,  the quality of the resulting estimators may dramatically improve,
both in terms of sample size and in terms of computational efficiency.

\subsection{Basic Notation} \label{ssec:defs}
We will use $p, q$ to denote the probability density functions (or probability mass functions)
of our distributions. If $p$ is discrete over support $[n] \eqdef \{1, \ldots, n\}$, we denote
by $p_i$ the probability of element $i$ in the distribution.
For discrete distributions $p, q$, their $\ell_1$ and $\ell_2$ distances are
$\|p -q \|_1 = \sum_{i=1}^n |p_i - q_i|$ and $\|p-q\|_2 = \sqrt{\sum_{i=1}^n (p_i - q_i)^2}$.
For $D \subseteq \R^d$ and density functions $p, q: D \to \R_+$, we have $\|p -q \|_1 = \int_D |p(x)-q(x)| dx$.
The total variation distance between distributions $p, q$ is defined to be $\dtv(p, q) = (1/2) \cdot \|p -q \|_1$.

Fix a partition of the domain $D$ into disjoint sets
$\mathcal{S} :=  (S_i)_{i=1}^{\ell}.$ For such a partition $\mathcal{S}$,
the {\em reduced distribution} $p_r^{\mathcal{S}}$ corresponding to $p$ and $\mathcal{S}$ 
is the discrete distribution over $[\ell]$ that assigns the $i$-th ``point'' the mass that $p$ assigns to the
set $S_i$; i.e., for $i \in [\ell]$, $p_r^{\mathcal{S}} (i) = p(S_i)$.

Our lower bound proofs will use the following metric, which can be seen as a generalization of 
the chi-square distance: For probability distributions $p, q$ and $r$ let
$
\chi_p(q,r) \eqdef \int \frac{dq dr}{dp} \;.
$




\subsection{Overview of Techniques}

In this section, we provide a high-level overview of our algorithmic and lower bounds techniques
in tandem with a comparison to prior related work.

\paragraph{Overview of Identity Testing Algorithm}

We start by describing our {\em uniformity tester} for $d$-dimensional $k$-histograms. 
For the rest of this intuitive description, we focus on histograms over $[0, 1]^d$.
A standard, yet important, tool we will use is the concept of a \emph{reduced distribution} defined above. 
Note that a random sample from the reduced distribution $p_r^{\mathcal{S}}$ can be obtained 
by taking a random sample from $p$ and returning the element of the partition 
that contains the sample.

The first observation is that if the unknown distribution $q \in \mathcal{H}^d_k$
and the uniform distribution $p = U$ are $\eps$-far in $L_1$-distance,
there exists a partition of the domain into $k$ rectangles $R_1, \ldots, R_k$
such that the difference between $q$ and $p$ can be detected based on the reduced distributions on this partition.
If we knew the partition $R_1, \ldots, R_k$ ahead of time, the testing problem would be easy:
Since the reduced distributions have support $k$, this would yield a uniformity tester
with sample complexity $O(\sqrt{k}/\eps^2)$.
The main difficulty is that the correct partition is unknown to the testing algorithm
(as it depends on the unknown histogram distribution $q$).

A natural approach, employed in~\cite{DKN:15} for $d=1$, is to appropriately ``guess'' the correct rectangle partition.
For the univariate case, a {\em single} interval partition already leads to a non-trivial uniformity tester.
Indeed, consider partitioning the domain into $\Theta(k/\eps)$ intervals of equal length (hence, of equal mass
under the uniform distribution). It is not hard to see that the reduced distributions over these intervals
can detect the discrepancy between $q$ and $p$, leading to a uniformity tester with sample complexity
$\Theta((k/\eps)^{1/2}/\eps^2) = \Theta(k^{1/2}/\eps^{5/2})$. This very simple scheme gives an identity testing with sub-learning
sample complexity when $\eps$ is constant --- albeit suboptimal for small $\eps$. Unfortunately, such an approach
can be seen to inherently fail even for two dimensions: Any {\em obliviously chosen} partition in two dimensions
requires $\Omega(k^2/\eps^2)$ rectangles, which leads to an identity tester with
sample complexity $\Omega(k/\eps^3)$. Hence, a more sophisticated approach is required in two dimensions
to obtain {\em any} improvement over learning.

Instead of using a single oblivious interval decomposition of the domain,
the sample-optimal $\Theta(k^{1/2}/\eps^{2})$ uniformity tester of~\cite{DKN:15} for univariate $k$-histograms
partitions the domain into intervals in {\em several different ways}, and runs a known $\ell_2$-tester
on the reduced distributions (with respect to the intervals in the partition) as a black-box.
At a high-level, we appropriately generalize this idea to the multidimensional setting.

To achieve this, we proceed by partitioning the domain into approximately $k$ identical
rectangles, distinguishing the different partitions based on the
shapes of these rectangles. This requirement to guess the shape is
necessary, as for example partitioning the square into rows will not
suffice when the true partition is a partition into columns. 
We show that it suffices to consider a poly-logarithmic sized set of partitions, where
any desired shape of rectangle can be achieved to within a factor of
$2$. In particular, we show that for each of the $k$ rectangles in the
true partition that are sufficiently large, at least one of our
oblivious partitions will use rectangles of approximately the same
size, and thus at least one rectangle in this partition will
approximately capture the discrepancy due to this rectangle (note that
only considering large rectangles suffices, since any rectangle on
which the uniform distribution assigns substantially more mass than $q$
must be reasonably large). This means that at least one partition will
have an $\eps/\polylog(k/\eps)$ discrepancy between $p$ and $q$, and by running an
identity tester on this partition, we can distinguish them.

One complication that arises here is that for small values of $\eps$, the
difference between $p$ and $q$ might be due to rectangles with area much
less than $1/k$. In order to capture these rectangles, we will need some
of our oblivious partitions to be into rectangles with area smaller
than $1/k$, for which there will necessarily be more than $k$ rectangles
in the partition (in fact, as many as $k/\eps$ many rectangles).
This would appear to cause problems for the following reason:
the sample complexity of $\ell_1$-uniformity testing over a discrete domain of size $n$
is $\Theta(n^{1/2}/\eps^2)$. Hence, naively using such a uniformity tester
on the reduced distributions obtained by a decomposition into $k/\eps$
rectangles would lead to the sub-optimal sample complexity of 
$\Theta((k/\eps)^{1/2}/\eps^2) = \Theta(k^{1/2}/\eps^{5/2})$.

We can circumvent this difficulty by leveraging the following insight:
Even though the total number of rectangles in the partition might be large, it can be
shown that for a well-chosen oblivious partition, {\em a reasonable
fraction of this discrepancy is captured by only $k$ of these rectangles}.
In such a case, the sample complexity of uniformity testing can be notably
reduced using an ``$\ell_1^k$-identity tester'' --- an identity tester
under a {\em modified metric} {\em that measures the discrepancy of the largest
$k$ domain elements}. \new{By leveraging the flattening method of~\cite{DK16},} 
we design such a tester with the optimal sample complexity of 
$O(\sqrt{k}/\eps^2)$ (Theorem~\ref{thm:l1k}) --- independent of the domain size.
This completes the sketch of our uniformity tester for
the multidimensional case.

To generalize our uniformity tester to an identity tester for multidimensional histograms,
two significant problems arise. The first is that it is no longer clear what the shape
of rectangles in the oblivious partition should be. This is because when the explicit distribution $p$
is not the uniform distribution, equally sized rectangles are not a natural option
to consider. This problem can be fixed by breaking the axes
into pieces that assign equal mass to the marginals of the known distribution 
(Lemma~\ref{lem:oblivious-covering-construction}).
The more substantial problem is that it is no longer clear that the discrepancy
between $p$ and $q$ can be captured by a partition of the square into $k$ rectangles.
This is because the two $k$ rectangle partitions corresponding to the $k$-histograms
$p$ and $q$ when refined could lead to a partition of the square into as many $k^2$ rectangles.

To remedy this, we note that there is still a partition into $k$ rectangles such that $q$
is piecewise constant on that partition. We show (Lemma~\ref{lem:split-capturing}) that if we refine this partition
slightly --- by dividing each region into two regions, the half on which $p$ is heaviest
and the half on which $p$ is lightest --- this new partition will
capture a constant fraction of the difference between $p$ and $q$. Given this structural result,
our identity testing algorithm becomes similar to our uniformity tester. We
obliviously partition our domain into rectangles poly-logarithmically
many times, each time we now divide each rectangle further into two
regions as described above, and then run identity testers on these
partitions. We show that if $p$ and $q$ differ by $\eps$ in $L_1$-distance,
then at least one such partition will detect at least $\eps/\polylog(k/\eps)$ of this
discrepancy.

\paragraph{Overview of Sample Complexity Lower Bound}
Note that $\Omega(\sqrt{k}/\eps^2)$ is a straightforward lower bound
on the sample complexity of identity testing $k$-histograms, even for $d=1$.
This follows from the fact that a $k$-histogram can simulate any discrete distribution over $k$ elements.

In order to prove lower bounds of the form $\omega(\sqrt{k}/\eps^2)$,
we need to show that any tester {\em must} consider many possible shapes of
rectangles. This suggests a construction where we have a grid of some
unknown dimensions, where some squares in the grid are dense and
the remainders are sparse in a checkerboard-like pattern. It should be
noted that if we have two such grids whose dimensions differ by
exactly a factor of $2$, it can be arranged such that the distributions are exactly uncorrelated
with each other. Using this observation, we can construct $\log(k)$ such uncorrelated
distributions that the tester will need to check for individually.
Unfortunately, this simple construction will not suffice to prove our desired lower bound, 
as one could merely run $\log(k)$ different testers in parallel. (We note, however, that this construction
does yield a non-trivial lower bound, see Proposition~\ref{IntermediateLBProp}.) 
We will thus need a slightly more elaborate construction, which we now describe: 
First, we divide the square domain into $\polylog(k)$ equal regions. Each of these regions is turned
into one of these randomly-sized checkerboards, but where different
regions will have different scales. We claim that this ensemble is
hard to distinguish from the uniform distribution.

The formal proof of the above sketched lower bound is somewhat technical and involves
bounding the chi-squared distance of taking $\Poi(m)$ samples from a random
distribution in our ensemble with respect to the distribution obtained
by taking $\Poi(m)$ samples from the uniform distribution. 

Bounding the chi-square distance is simplified by noting that since the sets of samples
from each of the $\sqrt{k}$ bins are independent of each other, we can
consider each of them independently. For each individual bin, we take
$s \sim \Poi(m/\sqrt{k})$ samples and need to compute
$\chi_{U^{\etens s}}(X^{\etens s},Y^{\etens s}) = \chi_U(X,Y)^{\new{s}}$, where $X$ and $Y$
are random distributions from our ensemble and $U$ is the uniform distribution. 
It is not hard to see that if $X$ and $Y$ are checkerboards of different scales,
then the $\chi^2$-value is exactly $1$.
This saves us a factor of $\log(k)$, as there are $\log(k)$ many different
scales to consider, and leads to the desired 
sample lower bound (Theorem~\ref{finalLBThm}).

\medskip

\noindent {\bf Organization} In Section~\ref{sec:algo}, we give our identity
testing algorithm. Our sample complexity lower bound proof 
is given in Section~\ref{sec:lb}. Finally, Section~\ref{sec:conc} outlines
some directions for future work.

\section{Sample Near-Optimal Identity Testing Algorithm} \label{sec:algo}

In Section~\ref{ssec:algo-proof},
we describe and analyze our identity tester, assuming the existence of a good
oblivious covering. In Section~\ref{ssec:good-cover}, we show the existence of such a covering.

\subsection{Algorithm and its Analysis} \label{ssec:algo-proof}

Let $q$ be the unknown histogram distribution and $p$
be the explicitly known one.
Our algorithm considers several judiciously chosen oblivious decompositions of the domain
that will be able to approximate a set on which we can distinguish our distributions.
We formalize the properties that we need these decompositions to have with the notion
of a {\em good oblivious covering} (Definition~\ref{defn:oblivious-covering} below).
The essential idea is that we cover the domain $[0,1]^d$ with rectangles that do not
overlap too much in such a way so that any partition of $[0,1]^d$ into $k$ rectangles
can be approximated by some union of rectangles in this family.

\begin{definition}[good oblivious covering]\label{defn:oblivious-covering}
Let $p$ be a probability distribution on $[0,1]^d$.
For $k, j, \ell \in \Z_+$ and $0<\eps \leq 1/2$,
a $(k, j, \ell, \eps)$-\emph{oblivious covering} of $p$ is a family $\mathcal{F}$ of subsets of $[0,1]^d$
satisfying the following:
\begin{enumerate}
\item For any partition $\Pi$ of $[0,1]^d$ into $k$ rectangles, there exists a subfamily $\mathcal{S} \subseteq \mathcal{F}$
such that:
\begin{enumerate}
\item We have that $|\mathcal{S}| \leq k \cdot j$ \label{item:size}.
\item The sets in $\mathcal{S}$ are mutually disjoint, i.e., $S_1 \bigcap S_2 = \emptyset$ for all $S_1 \neq S_2 \in \mathcal{S}$.
\item The sets in $\mathcal{S}$ together contain all except at most $\eps$ of the probability mass
of  $[0,1]^d$ under $p$, i.e., $p(\cup_{S \in \mathcal{S}} S) \geq 1-\eps$. \label{item:coverage}
\item For each $S \in \mathcal{S}$ there is some histogram rectangle
$R \in \Pi$ such that $S$ only contains points from $R$, i.e., $S \subseteq R$. \label{item:respect}
\end{enumerate}
\item For each point $x$ in $[0,1]^d$, the number of sets in $\mathcal{F}$ containing $x$ is exactly $\ell$.
\end{enumerate}
\end{definition}

\newcommand{\finelen}{\ensuremath{4kd/\epsilon}}
In Section~\ref{ssec:good-cover}, Lemma~\ref{lem:oblivious-covering-construction}, 
we establish the existence of a $(k,2^d \log^{d}(\finelen),\log^{d}(\finelen),\epsilon)$-oblivious covering of $p$
for any distribution $p$ on $[0,1]^d$ and for all $k, d,\epsilon$ with $\epsilon \leq 1/2$.

\medskip

Our basic plan will be that if $p$ is a distribution with a $(k,j,\ell,\epsilon/2)$-oblivious covering $\mathcal{F}$,
and $q$ is a $k$-histogram that differs from $p$ by at least $\epsilon$ in $L_1$-distance, then $q$ defines a partition $\Pi$ of
$[0,1]^d$ into $k$ rectangles. This partition gives rise to a subfamily $\mathcal{S}\subseteq \mathcal{F}$
satisfying the constraints specified in Definition~\ref{defn:oblivious-covering}.
We would like to show that a constant fraction of the discrepancy between $p$ and $q$ can be detected
by considering their restrictions to $\mathcal{S}$. There are a couple of obstacles to showing this,
the first of which is that we do not know what $\mathcal{S}$ is. Fortunately, we do have the guarantee
that $|\mathcal{S}|$ is relatively small. We can consider the restrictions of $p$ and $q$ over all sets in $\mathcal{S}$
and try to check if there is a significant discrepancy between the two coming from any small subset.
To achieve this, we will make essential use of an identity tester under the $\ell^1_k$-metric, which
we now define:

\begin{definition}[$\ell_1^k$-distance]
Let $p$ and $q$ be distributions on a finite size domain, that we denote by $[n]$ without loss
od generality.
For any positive integer $k\geq 1$, we define $\|p-q\|_{1,k}$ as the sum of the largest $k$
values of $|p(i)-q(i)|$ over $i \in [n]$.
\end{definition}

Note that $\|p-q \|_{1, k} \geq \eps$ means that there exists a set $\mathcal{A}$ of $k$ or fewer
domain elements such that $\sum_{s \in \mathcal{A}} |p(s)-q(s)| \geq \eps$. That is,
these elements alone contribute at least $\eps$ to the $\ell_1$-distance between the distributions.

We start by proving the following theorem:

\begin{theorem}[Sample-Optimal $\ell_1^k$ Identity Testing]\label{thm:l1k}
Given a known discrete distribution $p$ and sample access to an unknown discrete distribution $q$,
each of any finite domain size, there exists an algorithm that accepts with probability $2/3$ if $p=q$ 
and rejects with probability $2/3$ if $\|p-q \|_{1, k} \geq \eps.$
The tester requires only knowledge of the known distribution $p$ and
$O(\sqrt{k} / \epsilon^2)$ samples from $q$.
\end{theorem}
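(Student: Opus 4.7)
The plan is to apply the flattening method of~\cite{DK16} to reduce the $\ell_1^k$ identity testing problem to an ordinary $\ell_2$-identity testing problem on a distribution whose $\ell_2$-norm is $O(1/\sqrt{k})$. For each $i$ in the support of $p$, I will set $s_i \eqdef \lceil k p_i \rceil \geq 1$ and define the flattened distribution $p^\flat$ on $\{(i,j) : j \in [s_i]\}$ by $p^\flat(i,j) \eqdef p_i/s_i$. A sample from $q$ is turned into a sample from the corresponding flattening $q^\flat(i,j) \eqdef q_i/s_i$ by picking $j \in [s_i]$ uniformly at random after observing $i$. Since each flattened bucket has mass at most $1/k$, we immediately obtain $\|p^\flat\|_2^2 = \sum_i p_i^2/s_i \leq 1/k$ and $\|p^\flat\|_\infty \leq 1/k$.

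Next, I will translate the $\ell_1^k$-promise on $(p,q)$ into an $\ell_2$-promise on $(p^\flat,q^\flat)$. If $A \subseteq [n]$ witnesses $\|p-q\|_{1,k}\geq \eps$, then its lift $A^\flat \eqdef \{(i,j) : i \in A,\ j \in [s_i]\}$ has size $|A^\flat| = \sum_{i \in A} s_i \leq \sum_{i \in A}(k p_i + 1) \leq 2k$, while the discrepancy transfers unchanged:
\[ \sum_{(i,j)\in A^\flat}\bigl|p^\flat(i,j)-q^\flat(i,j)\bigr| \;=\; \sum_{i\in A} s_i \cdot \frac{|p_i-q_i|}{s_i} \;=\; \sum_{i\in A}|p_i-q_i|\;\geq\;\eps. \]
Thus $\|p^\flat-q^\flat\|_{1,2k}\geq \eps$, and applying Cauchy--Schwarz on the top $2k$ coordinates of $p^\flat - q^\flat$ yields $\|p^\flat-q^\flat\|_2 \geq \eps/\sqrt{2k}$.

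I will then plug $(p^\flat, q^\flat)$ into the chi-squared $\ell_2$-identity tester of~\cite{DK16}, based on the un-normalized statistic $S = \sum_{(i,j)}\bigl((X_{i,j} - m p^\flat(i,j))^2 - X_{i,j}\bigr)$ evaluated on Poissonized counts $X_{i,j}$. A standard computation gives $\E[S] = m^2\|p^\flat - q^\flat\|_2^2$ with null variance $\var[S] = 2m^2\|p^\flat\|_2^2$---crucially independent of the (possibly huge) support size of $p^\flat$. A Chebyshev argument then shows that $m = O(\|p^\flat\|_2/\eta^2)$ samples suffice to distinguish $p^\flat = q^\flat$ from $\|p^\flat-q^\flat\|_2\geq \eta$; substituting $\|p^\flat\|_2 \leq 1/\sqrt{k}$ and $\eta = \eps/\sqrt{2k}$ gives the claimed $m = O(\sqrt{k}/\eps^2)$.

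The main obstacle will be controlling $\var[S]$ under the alternative, where additional contributions of the form $2m^2\|q^\flat\|_2^2$ and $4m^3\sum_{(i,j)} q^\flat(i,j)(p^\flat(i,j)-q^\flat(i,j))^2$ appear, and where $\|q^\flat\|_2$ is no longer automatically bounded by $\|p^\flat\|_2$. I plan to handle this exactly as in~\cite{DK16}: use $\|p^\flat\|_\infty\leq 1/k$ to bound the cross term by $O(m^3\|p^\flat-q^\flat\|_2^2/k)$, and split the analysis by whether $\|q^\flat\|_2$ is close to $\|p^\flat\|_2$ or much larger (the latter regime already makes $\E[S]$ so large relative to its standard deviation that the test succeeds), so that the sample budget $O(\sqrt{k}/\eps^2)$ is never exceeded.
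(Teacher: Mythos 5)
Your proposal is correct and follows essentially the same route as the paper's proof: flatten $p$ into roughly $k$ extra buckets so that $\|p^\flat\|_2 \leq 1/\sqrt{k}$, observe that the $\ell_1^k$ witness lifts to at most $2k$ flattened coordinates so Cauchy--Schwarz gives $\|p^\flat - q^\flat\|_2 \geq \eps/\sqrt{2k}$, and invoke an $\ell_2$-tester with sample complexity $O(\|p^\flat\|_2/\eta^2) = O(\sqrt{k}/\eps^2)$. The only difference is presentational --- the paper cites the~\cite{CDVV14} $\ell_2$-tester as a black box, whereas you sketch re-deriving it via the Poissonized chi-squared statistic --- and your variance-control plan for the alternative is the standard one, so no gap remains.
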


Recall that if we wanted to distinguish between $p=q$ and $\|p-q\|_1 > \eps$, 
this would require $\Omega(\sqrt{n}/\eps^2)$ samples. However, the optimal $\ell_1$-identity 
testers are essentially adaptations of $\ell_2$-testers. That is, roughly speaking, they actually distinguish 
between $p=q$ and $\|p-q\|_2 > \epsilon/\sqrt{n}$. Hence, it should be intuitively clear why 
it would be easier to test for discrepancies in $\ell_1^k$-distance: 
If $\|p-q\|_{1,k} > \eps$, then $\|p-q\|_2 > \epsilon/\sqrt{k}$, making it easier for an  
$\ell_2$-type tester to detect the difference. \new{We apply the flattening technique of~\cite{DK16}
combined with the $\ell_2$-tester of~\cite{CDVV14} to obtain our optimal $\ell_1^k$-identity tester.
We note that an optimal $\ell_1^k$ closeness tester between discrete distributions 
was given in~\cite{DKN17}. The proof of Theorem~\ref{thm:l1k} follows along the same lines
and is given in Appendix~\ref{app:l1k}.}

The second obstacle is that although $q$ will be constant within each $S\in \mathcal{S}$,
it will not necessarily be the case that $p(S)$ and $q(S)$ will differ substantially even if the
variation distance between $p$ and $q$ on $S$ is large. To fix this, we show that $S$ can
be split into two parts such that at least one of the two parts will necessarily detect 
a large fraction of this difference:

\begin{lemma}\label{lem:split-capturing}
Let $p, q: \R^d \to \R_{+}$ and let $S$ be a bounded open subset in $\mathbb{R}^d$ on which $q$ is uniform.
Suppose $S$ is partitioned into two subsets $S_1,S_2$ such that $\vol(S_1)=\vol(S_2)=\vol(S)/2$ and
$p(s_1) \geq p(s_2)$ for all $s_1 \in S_1, s_2 \in S_2$, where $\vol()$ denotes Euclidean volume. Then,
$$
\max\left\{\left|\int_{S_1} (p(x)-q(x)) \mathrm{d} x\right|, \left|\int_{S_2}(p(x)-q(x)) \mathrm{d} x\right|\right\} \geq \int_{S} |p(x)-q(x)| \mathrm{d} x / 4.
$$
\end{lemma}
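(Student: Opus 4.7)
The plan is to reduce the lemma to a clean triangle-inequality argument by passing to $g := p - q$ and exploiting the threshold structure of $p$ on $S$. Since $q$ is uniform on $S$, write $q \equiv c$ there, and let $\tau$ be any scalar separating $p$ on $S_1$ from $p$ on $S_2$ (for instance, $\tau := \mathrm{essinf}_{S_1} p - c$, which by hypothesis is at least $\mathrm{esssup}_{S_2} p - c$). Then $g \geq \tau$ on $S_1$ and $g \leq \tau$ on $S_2$. Set $A_i := \int_{S_i} g\,\mathrm{d}x$, $v := \vol(S_1) = \vol(S_2)$, and $T := \int_S |g|\,\mathrm{d}x$; the goal is $\max(|A_1|,|A_2|) \geq T/4$.

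The first key step is an exact computation: because $g - \tau$ has constant sign on each piece, and because $\vol(S_1) = \vol(S_2)$ makes the $\tau v$ terms cancel,
\[ \int_S |g - \tau|\,\mathrm{d}x \;=\; \int_{S_1}(g-\tau)\,\mathrm{d}x \;-\; \int_{S_2}(g-\tau)\,\mathrm{d}x \;=\; (A_1 - \tau v) - (A_2 - \tau v) \;=\; A_1 - A_2 . \]
Applying the triangle inequality to the splitting $g = (g - \tau) + \tau$ then gives
\[ T \;\leq\; \int_S |g - \tau|\,\mathrm{d}x \;+\; |\tau|\cdot \vol(S) \;=\; (A_1 - A_2) + 2|\tau| v . \]

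The second key step bounds $|\tau| v$ by $M := \max(|A_1|,|A_2|)$, using the pointwise inequalities $g \geq \tau$ on $S_1$ and $g \leq \tau$ on $S_2$ integrated over the respective sets: $A_1 \geq \tau v$ and $A_2 \leq \tau v$. A two-case analysis on $\mathrm{sign}(\tau)$ finishes the job: if $\tau \geq 0$ then $|A_1| \geq A_1 \geq \tau v = |\tau|v$, and if $\tau \leq 0$ then $|A_2| \geq -A_2 \geq -\tau v = |\tau|v$. Hence $|\tau|v \leq M$ in either case, and since also $A_1 - A_2 \leq |A_1| + |A_2| \leq 2M$, we conclude $T \leq 2M + 2M = 4M$, which is exactly the claim.

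There is no serious obstacle here — the argument is essentially two lines of arithmetic — and the main thing to spot is that shifting $g$ by the threshold $\tau$ turns $\int_S |g - \tau|\,\mathrm{d}x$ into the signed difference $A_1 - A_2$ essentially for free, while the triangle-inequality slack $|\tau|\vol(S)$ is automatically dominated by whichever of $|A_1|, |A_2|$ has the ``right'' sign. Note that $p$ need not even be a probability density for the argument; the only hypotheses actually used are measurability, the pointwise separation, and $\vol(S_1) = \vol(S_2)$ (which is what cancels the $\tau v$ terms in the first computation).
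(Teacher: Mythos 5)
Your proof is correct. Writing $q\equiv c$ on $S$, the threshold $\tau=\mathrm{essinf}_{S_1}p-c$ indeed satisfies $g\geq\tau$ a.e.\ on $S_1$ and $g\leq\tau$ a.e.\ on $S_2$ (finiteness of $\tau$ follows from $p\geq 0$ and the separation hypothesis), the identity $\int_S|g-\tau|\,\mathrm{d}x=A_1-A_2$ is an exact computation using $\vol(S_1)=\vol(S_2)$, and the two bounds $A_1-A_2\leq 2M$ and $|\tau|v\leq M$ combine with the triangle inequality to give $T\leq 4M$. This is a genuinely different route from the paper's. The paper decomposes $S$ into $W=\{x\in S: p(x)\geq q(x)\}$ and $W'=S\setminus W$, shows separately that $\max\{|A_1|,|A_2|\}$ is at least half of $\int_W(p-q)\,\mathrm{d}x$ and half of $\int_{W'}(q-p)\,\mathrm{d}x$ via a case analysis on whether $S_1\subseteq W$ (comparing integrands over $S_1$ and $S_2\cap W$ and using the volume constraint), and then adds the two bounds. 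Your threshold-shift argument replaces that case analysis with a single algebraic identity plus a sign check on $\tau$, which makes the role of each hypothesis (the separation gives the sign pattern of $g-\tau$; the equal volumes cancel the $\tau v$ terms; uniformity of $q$ is what transfers the separation from $p$ to $g$) completely transparent, and it arrives at the same constant $1/4$. One small wording caveat: uniformity of $q$ on $S$ is genuinely used (to get a threshold for $g$ rather than just for $p$), so it should be listed among the hypotheses your argument needs.
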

\begin{proof}
Let $W\subseteq S$ be the set of points $x \in S$ for which $p(x) \geq q(x)$ and $W'=S\backslash W$. 
Then we have that
$$
\int_{S} |p(x)-q(x)| \mathrm{d} x = \int_{W} (p(x)-q(x)) \mathrm{d} x + \int_{W'} (q(x)-p(x)) \mathrm{d} x.
$$
We will show that
\begin{equation}\label{eq:subsequent-error}
\max\left\{\left|\int_{S_1} (p(x)-q(x)) \mathrm{d} x\right|, \left|\int_{S_2} (p(x)-q(x)) \mathrm{d} x\right|\right\} \geq \int_{W} (p(x)-q(x)) \mathrm{d} x/2.
\end{equation}
By an argument analogous to the one we will give to prove Equation~\eqref{eq:subsequent-error}, one can also prove that
$$
\max\left\{\left|\int_{S_1} (p(x)-q(x)) \mathrm{d} x\right|, \left|\int_{S_2} (p(x)-q(x)) \mathrm{d} x\right|\right\} \geq \int_{W'} (q(x)-p(x)) \mathrm{d} x/2.
$$
Combining the above will give Lemma~\ref{lem:split-capturing}.

Note that if $S_1 = S \cap W$, Equation \eqref{eq:subsequent-error} immediately holds.
In fact, it holds even without the factor of two on the right hand side.
Similarly, if $S_1 \subseteq S \cap W$, then it also holds (but this time with the factor of two). To show this, 
note that
\[
\int_{W} (p(x)-q(x)) \mathrm{d} x  = \int_{S_1 \cap W} (p(x)-q(x)) \mathrm{d}x + \int_{S_2 \cap W} (p(x)-q(x)) \mathrm{d}x = \int_{S_1} (p(x)-q(x)) \mathrm{d}x + \int_{S_2 \cap W} (p(x)-q(x)) \mathrm{d}x.
\]
The RHS is a sum of two integrals where the second integral's integrand is always smaller 
than the smallest value of the first integral's integrand. Furthermore, the second integral 
is over a region that is no larger than the first region of the first integral, 
because $\vol(S_1)=\vol(S)/2$, while $\vol(S_2 \cap W) \leq \vol(S_2) = \vol(S)/2$. 
Thus, we have
\[
\int_{W} (p(x)-q(x)) \mathrm{d} x  \leq 2 \int_{S_1} (p(x)-q(x)) \mathrm{d}x \;,
\]
which implies Equation~\eqref{eq:subsequent-error}.

The final case needed to prove Equation~\eqref{eq:subsequent-error} holds is when $S_1 \cap W \subsetneq S_1$, which is equivalent to saying that $S_1$ contains points $x$ for which $p(x) < q(x)$. Let $h = -\int_{S_1 \cap W'} (p(x)-q(x)) \mathrm{d} x \geq 0$. Then we have
\[
\int_{W} (p(x)-q(x)) \mathrm{d} x = h + \int_{S_1} (p(x)-q(x)) \mathrm{d} x \;.
\]
If $h \leq \int_{W} (p(x)-q(x)) \mathrm{d} x/2$, then
we can substitute this into the preceding equation and we are done.
Otherwise, $h > \int_{W} (p(x)-q(x)) \mathrm{d} x/2$.
Note that  in this case, $|\int_{S_2} (p(x)-q(x)) \mathrm{d} x| \geq h$.
\footnote{This is because the integrand on the RHS is always more negative value of the integrand
on the RHS and the region the integral on the LHS is over is at least as large as that of the integral on the RHS.
This is very similar to the reasoning in the earlier case where $S_1 \cap W = S_1$ above.} Putting these together gives
\[
\left|\int_{S_2} (p(x)-q(x)) \mathrm{d} x \right| > \int_{W} (p(x)-q(x)) \mathrm{d} x/2 \;,
\]
completing the proof.
\end{proof}

We can now state the main algorithmic result of this section:

\begin{theorem}\label{thm:covering-implies-algo}
Let $p$ be a known distribution on $[0,1]^d$ with a $(k,j,\ell,\eps/2)$-\emph{oblivious covering}.
There exists a tester that given sample access to an unknown $k$-histogram $q$ on $[0,1]^d$ 
distinguishes between $p=q$ and $\dtv(p,q)\geq \eps$ with probability at least $2/3$
using $O(\sqrt{k j} \cdot \ell^2 / \epsilon^2)$ samples.
\end{theorem}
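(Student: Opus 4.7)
The plan is to reduce identity testing to an $\ell_1^{2kj}$--identity-testing problem on a discrete domain and then invoke Theorem~\ref{thm:l1k}. Since $p$ is explicitly known, I split each $F \in \mathcal{F}$ into two subsets $F_1, F_2$ of equal Euclidean volume with $p(x) \geq p(y)$ for all $x \in F_1, y \in F_2$ (this is possible even with ties since $p$ is piecewise constant), and let $\mathcal{F}' := \{F_1, F_2 : F \in \mathcal{F}\}$. Each point of $[0,1]^d$ still belongs to exactly $\ell$ sets of $\mathcal{F}'$. Define reduced distributions on the index set $\mathcal{F}'$ by $\tilde p(F') := p(F')/\ell$ and $\tilde q(F') := q(F')/\ell$; these are genuine probability measures. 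A sample from $\tilde q$ is produced from a single sample $x \sim q$ by outputting a uniformly random one of the $\ell$ elements of $\mathcal{F}'$ containing $x$.

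The core claim is: if $\dtv(p,q) \geq \eps$, then $\|\tilde p - \tilde q\|_{1,2kj} \geq \eps/(8\ell)$. To prove it, let $\Pi$ be a $k$-rectangle partition on which $q$ is constant and apply the oblivious-covering property to $\Pi$, obtaining a disjoint subfamily $\mathcal{S}\subseteq\mathcal{F}$ with $|\mathcal{S}| \leq kj$, $p(\cup\mathcal{S}) \geq 1 - \eps/2$, and each $S\in\mathcal{S}$ contained in some rectangle of $\Pi$ (so $q$ is constant on $S$). A one-sided calculation controls the discrepancy outside the covered region: since $\int(p-q)_+ = \dtv(p,q) \geq \eps$ and $\int_{\cup\mathcal{S}^c}(p-q)_+ \leq p(\cup\mathcal{S}^c) \leq \eps/2$, we obtain $\int_{\cup\mathcal{S}}(p-q)_+ \geq \eps/2$, whence $\int_{\cup\mathcal{S}}|p-q| \geq \eps/2$.

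For each $S \in \mathcal{S}$, the two pieces $S_1,S_2 \in \mathcal{F}'$ satisfy the hypotheses of Lemma~\ref{lem:split-capturing} (since $q$ is constant on $S$), giving $|p(S_1)-q(S_1)| + |p(S_2)-q(S_2)| \geq \int_S|p-q|/4$. Summing over $S \in \mathcal{S}$ and using disjointness, the subfamily $\mathcal{S}' := \bigcup_{S\in\mathcal{S}}\{S_1,S_2\}\subseteq \mathcal{F}'$, of size at most $2kj$, satisfies
$$\sum_{S'\in\mathcal{S}'}|p(S')-q(S')| \;\geq\; \tfrac{1}{4}\int_{\cup\mathcal{S}}|p-q| \;\geq\; \eps/8.$$
Dividing by $\ell$ yields $\|\tilde p - \tilde q\|_{1,2kj} \geq \eps/(8\ell)$. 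The algorithm now invokes the $\ell_1^{2kj}$--identity tester of Theorem~\ref{thm:l1k} on $\tilde p$ (computable from $p$) and $\tilde q$ with parameter $\eps' := \eps/(8\ell)$; this uses $O(\sqrt{2kj}/\eps'^2) = O(\sqrt{kj}\,\ell^2/\eps^2)$ samples from $\tilde q$, hence from $q$. Completeness is immediate ($p = q \Rightarrow \tilde p = \tilde q$), and soundness follows from the key claim.

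The main obstacle I anticipate is establishing the key claim cleanly: it requires both the coverage guarantee (1c), which forces most of the ``$p$-heavy'' discrepancy inside $\cup\mathcal{S}$ via the one-sided bound above, and Lemma~\ref{lem:split-capturing}, which converts per-set $L_1$-discrepancies into discrepancies of integrated probability masses after splitting. A subtle accounting point is the $1/\ell$ normalization of the reduced distributions; this is precisely what introduces the $\ell^2$ factor in the final sample complexity, and it must be tracked carefully so that the $\ell_1^{2kj}$-metric gap $\eps/(8\ell)$ matches what the sub-family $\mathcal{S}'$ of size $\leq 2kj$ witnesses.
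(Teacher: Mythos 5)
Your proposal is correct and follows essentially the same route as the paper's proof: the same split of each covering set into a $p$-heavy and $p$-light half, the same reduced distributions normalized by $\ell$, the same one-sided mass argument showing $\int_{\cup\mathcal{S}}|p-q|\geq\eps/2$, and the same application of Lemma~\ref{lem:split-capturing} followed by the $\ell_1^{2kj}$-tester of Theorem~\ref{thm:l1k} with parameter $\eps/(8\ell)$. All the accounting (the $2kj$ support bound, the $1/\ell$ normalization producing the $\ell^2$ factor) matches the paper's argument.
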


Plugging in the bounds on $j$ and $\ell$ of $2^{d} \log^{d}(kd/\epsilon)$ and $\log^{d}(kd/\epsilon)$ from Lemma~\ref{lem:oblivious-covering-construction} (established in Section~\ref{ssec:good-cover}) yields a sample complexity upper bound of  
$O(\sqrt{k} 2^{d/2}\log^{2.5 d}(kd/\epsilon) / \epsilon^2)$ for $\epsilon \leq 1/2$. 
This gives the upper bound portion of Theorem~\ref{thm:main}.

\smallskip

The high-level idea of the algorithm establishing Theorem~\ref{thm:covering-implies-algo} 
is to take each element of the oblivious cover and divide it in two, as in Lemma \ref{lem:split-capturing}, 
and then use the tester from Theorem \ref{thm:l1k} on the induced distributions of $p$ and $q$ on the resulting sets. 
The algorithm itself is quite simple and is presented in pseudo-code below.

\begin{algorithm}
\caption{Identity Tester for $d$-dimensional $k$-histograms}
\label{alg:tester}

Input: sample access to $k$-histogram distribution $q: [0, 1]^d \to \R_+$, $\eps > 0$, 
and explicit distribution $p: [0, 1]^d \to \R_+$ with $(k, j,\ell,\eps/2)$-oblivious covering.\\
Output: ``YES'' if $q = p$; ``NO'' if $\|q-p\|_{1} \ge \eps.$

\begin{enumerate}
\item Let $\mathcal{F}$ be a $(k,j,\ell,\epsilon/2)$-{oblivious covering} of the known distribution $p$.
\item Obtain a new family of sets $\mathcal{F'}$ by taking each $S\in \mathcal{F}$ and replacing it with the two sets $S_1$ and $S_2$ as defined in Lemma~\ref{lem:split-capturing}.
\item Define discrete distributions $p',q'$ over $\mathcal{F'}$ where a random sample, $x$, from $p'$ (resp. $q'$) is obtained by taking a random sample from $p$ (resp. $q$) and then returning a uniform random element of $\mathcal{F'}$ containing $x$. 
(We note that the distribution $p'$ can be explicitly computed, and we can take a sample from $q'$ 
at the cost of taking a sample from $q$.)
\item Use the algorithm from Theorem~\ref{thm:l1k} to distinguish between $p'=q'$ and the existence of a set $\mathcal{A}$ of size at most $2 k\cdot j$ with $\sum_{S\in \mathcal{A}} |p'(S)-q'(S)| \geq \eps/(8\ell)$.
\item Output ``YES'' in the former case and ``NO'' in the latter case.
\end{enumerate}
\end{algorithm}

\begin{proof}[Proof of Theorem~\ref{thm:covering-implies-algo}]
We note that the sample complexity of the tester described
in Algorithm~\ref{alg:tester} is $O(\sqrt{kj}\ell^2/\eps^2)$, as desired.
It remains to prove correctness.

The completeness case is straightforward.
If $p=q$, then clearly $p'=q'$ and our tester will accept with probability at least $2/3$.

We now proceed to prove soundness.
If $\dtv(p,q)\geq \eps$, we claim that our tester will reject with probability at least $2/3$.
For this we note that the unknown distribution $q$ defines some partition $\Pi$ of $[0,1]^d$ into $k$ rectangles
such that $q$ is constant on each part of the partition.
By the definition of an oblivious cover, there is a subfamily of disjoint sets $\mathcal{S}\subseteq \mathcal{F}$ such that:
\begin{itemize}
\item $q$ is constant on each element of $\mathcal{S}$.
\item $|\mathcal{S}|\leq k\cdot j$.
\item Letting $V=\bigcup_{S\in \mathcal{S}} S$, we have that $p\left(V\right) \geq 1-\eps/2$.
\end{itemize}
Since $\eps=\dtv(p,q)=\int_{[0,1]^d} \max(p-q,0)dx,$ we have that 
$\int_{V} \max(p-q,0)dx \geq \eps - \int_{[0,1]^d \setminus V}pdx \geq \eps/2$. Therefore, since the elements of $\mathcal{S}$ are disjoint, we have that
$\sum_{S\in \mathcal{S}} \int_S |p-q|dx \geq \eps/2.$

We now let $\mathcal{A}\subseteq\mathcal{F'}$ be the collection of all $S_1$ or $S_2$ corresponding to an $S\in \mathcal{S}$. We note that $|\mathcal{A}|=2|\mathcal{S}| \leq 2k\cdot j$. Furthermore, by Lemma \ref{lem:split-capturing}, we have that
\begin{align*}
\eps/8 & \leq \sum_{S\in \mathcal{S}} \int_S |p-q|dx/4\\
& \leq \sum_{S\in \mathcal{S}} \max \left\{ |p(S_1)-q(S_1)|, |p(S_2)-q(S_2)| \right\}\\
& \leq \sum_{A\in \mathcal{A}} |p(A)-q(A)|.
\end{align*}
On the other hand, for $A\in\mathcal{A}$, we have that $p'(A) = p(A)/\ell$ and $q'(A)=q(A)/\ell$, so we have that
$$
\sum_{A\in\mathcal{A}} |p'(A)-q'(A)| \geq \eps/(8\ell).
$$
Therefore, if $\dtv(p,q)\geq \eps$, our tester will reject with probability at least $2/3$.

This completes the proof of Theorem~\ref{thm:covering-implies-algo}.
\end{proof}

\begin{remark} \label{rem:rob}
{\em Algorithm~\ref{alg:tester} is robust in the sense that it still works even if $q$ is only (say) $\epsilon/10$-close to some $k$-histogram distribution $\widetilde{q}$ instead of actually being one. To show this, one can note that the existing proof applied to $p$ and $\widetilde{q}$ gives an $\mathcal{A}$ such that $\sum_{A \in \mathcal{A}} |p(A) - \widetilde{q}(A)|$ is at least $\epsilon/8$. The triangle inequality then implies $\sum_{A \in \mathcal{A}} |p(A) - q(A)| \geq \epsilon/40$, which, by the same reasoning given in the proof of the non-robust case, implies the algorithm is still correct.}
\end{remark}

\begin{remark} \label{rem:disc-ub}
{\em Even though our testing algorithm was phrased for histograms
over $[0, 1]^d$, it can be made to apply for discrete histograms on $[m]^d$ via a simple reduction. 
In particular, if each element of $[m]^d$ is replaced by a box of side length $1/m$ on each side, 
a $k$-histogram on $[m]^d$ is transformed into a $k$-histogram over $[0,1]^d$, 
in a way that preserves total variation distance. If our algorithm is applied to the latter
histogram, we can obtain correct results for the former.}
\end{remark}

\subsection{Construction of Good Oblivious Covering} \label{ssec:good-cover}

In this section, we prove the existence of an oblivious covering:

\begin{lemma}\label{lem:oblivious-covering-construction}
For any continuous distribution $p$ on $[0,1]^d$, positive integer $k$ and $\eps\leq 1$,
there exists a $\left(k, 2^d \log^{d}(\finelen), \log^{d}(\finelen), \eps \right)$-oblivious covering of $p$.
\end{lemma}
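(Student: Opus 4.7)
The plan is to build $\mathcal{F}$ as products of axis-aligned dyadic intervals adapted to the marginals of $p$. I would set $M = \finelen$, rounded up to the next power of $2$, and, on each axis $i$, let $0 = x_{i,0} < x_{i,1} < \cdots < x_{i,M} = 1$ be the quantile points of the $i$-th marginal $P_i$, so that each elementary slab $\{x : x_i \in [x_{i,r-1}, x_{i,r}]\}$ carries $P_i$-mass exactly $1/M$ (using continuity of $p$). For each scale $s \in \{0,1,\ldots,\log_2 M\}$, the dyadic intervals on axis $i$ at scale $s$ would be $[x_{i,r 2^s}, x_{i,(r+1) 2^s}]$ for $r = 0, \ldots, M/2^s - 1$. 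I would then take $\mathcal{F}$ to be the collection of all $d$-fold Cartesian products of such dyadic intervals, where the scale is chosen independently on each axis. Property~2 is then immediate: any point lies in exactly one dyadic interval at each of the $\log_2 M + 1$ scales on each axis, giving multiplicity $(\log_2 M + 1)^d$, which matches $\log^d(\finelen)$ up to absorbing the base and additive constants.

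For property~1, I would fix an arbitrary partition $\Pi$ of $[0,1]^d$ into $k$ rectangles $R_t = \prod_i [a_{t,i}, b_{t,i}]$ and, for each $R_t$, approximate it from the inside by $R'_t = \prod_i [a'_{t,i}, b'_{t,i}]$, where $a'_{t,i}$ is the smallest quantile point $\geq a_{t,i}$ and $b'_{t,i}$ is the largest quantile point $\leq b_{t,i}$ (omitting $R_t$ if some axis has $a'_{t,i} > b'_{t,i}$, meaning the rectangle is too thin to contain any quantile interval on that axis). A standard greedy dyadic decomposition writes each $[a'_{t,i}, b'_{t,i}]$ as a disjoint union of at most $2\log_2 M$ dyadic intervals on axis $i$, and taking products realizes $R'_t$ as a disjoint union of at most $(2\log_2 M)^d = 2^d \log_2^d M$ elements of $\mathcal{F}$, all contained in $R_t$. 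Summing over the $k$ rectangles delivers the bound $|\mathcal{S}| \leq k \cdot j$, and disjointness across different $R_t$ is inherited from the fact that the $R_t$ themselves partition the cube.

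The one step that really needs care is verifying that the truncation loses at most $\eps$ of $p$-mass, and this is the main obstacle. Inside each $R_t$, the missed region $R_t \setminus R'_t$ is contained in the union of the $2d$ boundary slabs $\{x \in R_t : x_i \in [a_{t,i}, a'_{t,i}] \cup [b'_{t,i}, b_{t,i}]\}$, one pair per axis, and this is the only place where the multidimensional geometry interacts with the one-dimensional partition structure. Projecting any such slab onto its defining axis bounds its $p$-mass by $P_i$ of a union of at most two fine intervals, hence by $2/M$. The loss from $R_t$ is therefore at most $2d/M$ (and at most $1/M$ when $R_t$ was omitted for thinness), and because the $R_t$ are disjoint the boundary slabs corresponding to different $R_t$ are disjoint as well, so the losses add without double-counting. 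Summing over all rectangles produces a total loss of at most $2kd/M$, which the choice $M = \finelen$ turns into $\eps/2 < \eps$, giving clause~\ref{item:coverage} and completing the verification of Definition~\ref{defn:oblivious-covering}.
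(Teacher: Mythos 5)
Your construction is essentially the paper's: both build $\mathcal{F}$ from products of equal-marginal-mass dyadic intervals (one scale chosen independently per axis), verify the multiplicity $\ell$ by noting each point lies in one cell per grid, and handle Property~1 by snapping each rectangle inward to the finest quantile grid, decomposing each trimmed axis interval into $O(\log(\finelen))$ dyadic pieces, and charging the lost mass to the $2d$ boundary slabs via the marginals. The argument is correct, with only cosmetic differences in indexing and constants.
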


\begin{proof}
The basic idea of our construction will be to let $\mathcal{F}$ be a union of
grids where the number of cells in each direction is a power of $2$.

For each coordinate, $j \in [d]$, and each non-negative integer $i$, define the $i^{th}$ partition
of this coordinate to be a partition of $[0,1]$ into $2^i$ intervals such that $j^{th}$ marginal, $p_j$, 
of $p$ assigns each interval in the partition equal mass, and such that the $i^{th}$ partition is a refinement of the $(i-1)^{st}$.

\begin{figure}
    \centering
    \begin{subfigure}[b]{0.2\textwidth}
        \includegraphics[width=\textwidth]{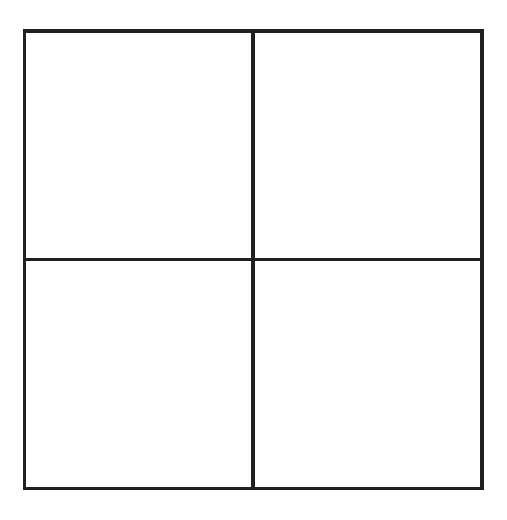}
        \caption{A $z$-grid with $z=[1,1]$.}
    \end{subfigure}
\quad
    \begin{subfigure}[b]{0.2\textwidth}
        \includegraphics[width=\textwidth]{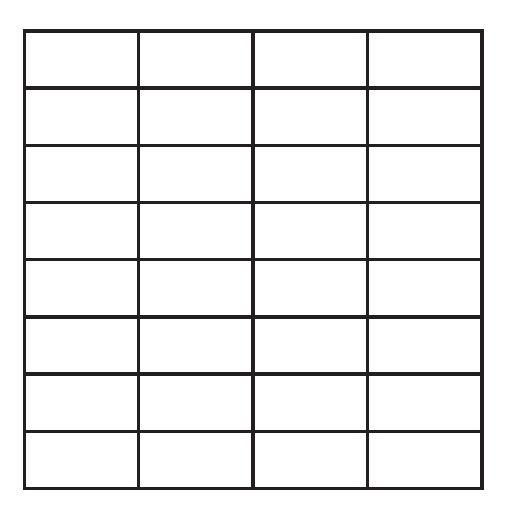}
        \caption{A $z$-grid with $z=[2,3]$.}
    \end{subfigure}
    \caption{$z$-grids for different values of $z$ partition $[0,1]^2$ into rectangles. In this figure, the axes are scaled such that the marginal distributions of the vertical and horizontal coordinates, respectively, of $p$ are uniform.} \label{fig:z-grid}
\end{figure}


For each vector $z \in \mathbb{N}^d$, define the $z$-grid as the partition of $[0,1]^d$
into rectangles by taking the product of the $z_j^{th}$ partition of the $j^{th}$ coordinate.
We let $\mathcal{F}$ be the union of the cells in the $z$-grid for all $z \in \mathbb{N}^d \cap [0,m-1]$
for $m=\log_2(\finelen)$. An illustration is given in Figure~\ref{fig:z-grid}.

We note that each $x\in [0,1]^d$ is in exactly one cell in each $z$-grid,
and therefore is contained in exactly $m^d$ elements of $\mathcal{F}$, verifying Property 2.

For Property 1, consider a partition of $[0,1]^d$ into rectangles $R_1,\ldots,R_k$.
We claim that for each $R_i$ there is a subfamily $\mathcal{T}_i\subseteq \mathcal{F}$
of disjoint subsets of $R_i$ with $|\mathcal{T}_i|\leq 2^d m^d$,
and such that $p\left( R_i \backslash \bigcup_{S\in \mathcal{T}_i} S\right) \leq \eps/k$.
It is then clear that taking $\mathcal{S}$ to be the union of the $\mathcal{T}_i$ will suffice.
In fact, we will show that for any rectangle $R_i$, there is a corresponding $\mathcal{T}_i$ with these properties.
\begin{figure}
    \centering
        \includegraphics[width=2in]{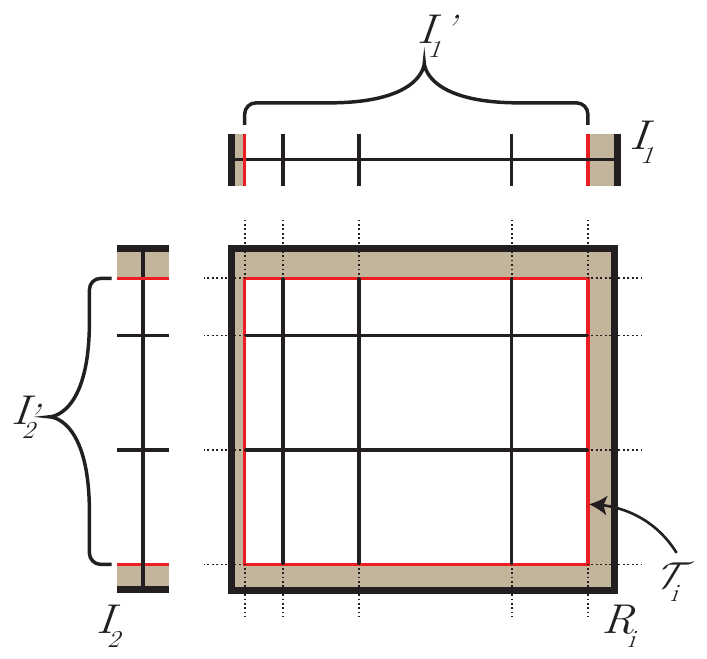}
    \caption{How our oblivious covering is used to cover a rectangle $R_i$ in the proof of Lemma~\ref{lem:oblivious-covering-construction}. Each dimension of $R_i$ is separately decomposed into non-overlapping one-dimensional rectangles, with a small amount of area shaded in beige left over on the sides. $\mathcal{T}_i$ is obtained by taking the family of all Cartesian products of the form $I''_1 \times \cdots \times I''_d$ where, for each $j$, $I''_j$ is any subinterval in the decomposition of $I'_j$. In this figure, the axes are scaled such that the marginal distributions of the vertical and horizontal coordinates, respectively, of $p$ are uniform.} \label{fig:covering}
\end{figure}

We let $R_i=\prod_{j=1}^d I_j$ for intervals $I_j$. We let $I_j'$ be $I_j$ minus the intervals of the $(m-1)^{st}$-partition
of the $j^{th}$ coordinate that contain the endpoints of $I_j$. We note that $p_j(I_j\backslash I_j') \leq \eps/(kd)$
and that $I_j'$ is a union of consecutive intervals in the $(m-1)^{st}$ partition of this coordinate.
We claim that this means that $I_j'$ is the union of at most $2m$ intervals of one of the first $m-1$ partitions
of the $j^{th}$ coordinate. This is easy to see by induction on $m$, as $I_j'$ is a union of consecutive intervals
in the $(m-2)^{nd}$ partition union at most one interval of the $(m-1)^{st}$ on either end.
The one-dimensional intervals on the top and left of Figure~\ref{fig:covering} show an illustration of this.

In order to produce $\mathcal{T}_i$, we write each $I_j'$ as a union of at most $2m$ intervals
from the relevant partitions. We let $\mathcal{T}_i$ be the set of rectangles obtained by taking
the product of one rectangle from each of these sets. It is then clear that $\mathcal{T}_i$
partitions $\prod_{j=1}^d I_j'$ into at most $(2m)^d$ pieces. Figure~\ref{fig:covering}
shows an illustration of this. We now note that
$$
p\left( R\backslash \prod_{j=1}^d I_j'\right) = p\left( \prod_{j=1}^d I_j\backslash \prod_{j=1}^d I_j'\right) \leq \sum_{j=1}^d p_j(I_j\backslash I_j') \leq \eps/k \;.
$$
Thus, $\mathcal{T}$ satisfies all of the desired properties, and taking the union of the $\mathcal{T}_i$
will yield an appropriate $\mathcal{S}$. This completes the proof of Lemma~\ref{lem:oblivious-covering-construction}.
\end{proof}

\section{Sample Complexity Lower Bound} \label{sec:lb}

In this section, we prove the sample complexity lower bound of Theorem~\ref{thm:main}.
The structure of this section is as follows:
We begin (Proposition~\ref{basicLBProp}) by providing a new proof 
that $\Omega(\sqrt{k}/\eps^2)$ samples are required to test uniformity of a $k$-histogram in one dimension. 
The purpose of reproving this previously known result is so that we may later generalize it to higher dimensions.
We then proceed by describing a basic construction that yields a slightly improved lower bound 
(Proposition~\ref{IntermediateLBProp}).
Finally, we present a more sophisticated construction that suffices to establish our final lower bound 
in Theorem \ref{finalLBThm}.

\paragraph{Basic Background.}
Recall the definition of the $\chi$-metric. 
Notice that, for fixed $q$, $\chi_p(q,r)$ is an inner product on distributions $q, r$.
Furthermore, by the Cauchy-Schwarz inequality it follows that if $q$ and $p$ are probability distributions then
$$
\chi_p(q,q) = \int\frac{dq^2}{dp} = \left( \int\frac{dq^2}{dp}\right) \left( \int dp \right) \geq \left( \int dq \right)^2 = 1 \;.
$$
This metric is useful for determining whether or not distributions can be distinguished.
In particular, if $q$ and $p$ can be distinguished from a single sample,
it must be the case that $\chi_p(q,q)$ is much bigger than $1$. Formally, we have:

\begin{lemma}\label{chiSquaredLBLem}
Suppose that $q$ and $p$ are probability distributions.
Suppose furthermore that there is an algorithm that given a random sample from $q$ accepts
with probability at least $2/3$, and given a random sample from $p$ rejects with probability at least $2/3$.
Then, it holds that $\chi_p(q,q) \geq 4/3$.
\end{lemma}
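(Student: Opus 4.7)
The plan is to translate the indistinguishability assumption into a statement about a test function and then apply Cauchy--Schwarz in $L^2(dp)$. Encode the tester by its acceptance function $\phi: \mathcal{X} \to [0,1]$, where $\phi(x)$ is the probability the (possibly randomized) algorithm accepts upon seeing $x$. The hypothesis becomes
\[
\int \phi \, dq \geq 2/3 \quad\text{and}\quad \int \phi \, dp \leq 1/3.
\]

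Next I would rewrite $\int \phi \, dq = \int \phi \cdot \tfrac{dq}{dp} \, dp$ (assuming $q \ll p$; otherwise $\chi_p(q,q) = \infty$ and the bound is trivial) and apply Cauchy--Schwarz with respect to the measure $dp$, pairing $\phi$ with $\tfrac{dq}{dp}$. This yields
\[
\left(\int \phi \, dq\right)^2 \;=\; \left(\int \phi \cdot \tfrac{dq}{dp} \, dp\right)^2 \;\leq\; \left(\int \phi^2 \, dp\right)\left(\int \left(\tfrac{dq}{dp}\right)^2 dp\right) \;=\; \left(\int \phi^2 \, dp\right) \cdot \chi_p(q,q).
\]

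Finally, since $\phi$ takes values in $[0,1]$, $\phi^2 \leq \phi$ pointwise, so $\int \phi^2 \, dp \leq \int \phi \, dp \leq 1/3$. Combining with the lower bound $\int \phi \, dq \geq 2/3$ gives
\[
(2/3)^2 \;\leq\; (1/3) \cdot \chi_p(q,q),
\]
i.e., $\chi_p(q,q) \geq 4/3$, as required.

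There is no substantive obstacle here; the only points requiring care are (i) allowing a randomized test via the acceptance function $\phi$ rather than a sharp accept/reject set, and (ii) the standard caveat that we may assume $q$ is absolutely continuous with respect to $p$, since otherwise $\chi_p(q,q)$ is infinite and the inequality holds trivially. The bound is tight in spirit: the constants $2/3$ and $1/3$ produce exactly the $4/3$ constant through $(2/3)^2/(1/3)$.
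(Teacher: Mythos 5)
Your proof is correct and is essentially the paper's argument: the paper applies Cauchy--Schwarz to the indicator of the acceptance set $A$, using $q(A)\geq 2/3$ and $p(A)\leq 1/3$ to get $\chi_p(q,q)\geq \int_A \frac{dq^2}{dp}\geq 3(\int_A dq)^2\geq 4/3$, which is exactly your computation specialized to $\phi=\mathbf{1}_A$. Your version with a general acceptance function $\phi\in[0,1]$ (using $\phi^2\leq\phi$) is a minor but clean generalization that also covers randomized testers.
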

\begin{proof}
Let $A$ be the set on which the algorithm accepts. 
We then have that $q(A)\geq 2/3$ and $p(A) \leq 1/3$. Therefore, we have that
$$
\chi_p(q,q) \geq \int_A \frac{dq^2}{dp} \geq 3\left( \int_A\frac{dq^2}{dp}\right) \left( \int_A dp \right) \geq 3\left( \int_A dq \right)^2\geq 4/3.
$$
\end{proof}

\subsection{Lower Bound for Uniformity Testing of Univariate Histograms} \label{ssec:lb-1d}

We start by using Lemma~\ref{chiSquaredLBLem} to prove a lower bound 
on the number of samples required to test uniformity of univariate $k$-histograms.
We build on this argument in the following subsections to establish our final multidimensional
lower bound.

The idea is to use a standard adversary argument, using Lemma \ref{chiSquaredLBLem} to show that it is impossible
to distinguish samples taken from a distribution from a particular ensemble, from those taken from the uniform distribution.
\begin{proposition}\label{basicLBProp}
If there exists an algorithm that given $s$ independent samples from an unknown $k$-histogram, $q$, on $[0,1]$
and accepts with at least $2/3$ probability if $q=U$ and rejects with at least $2/3$ probability if $\dtv(q,U)\geq \eps$,
then $s = \Omega(\sqrt{k}/\eps^2)$.
\end{proposition}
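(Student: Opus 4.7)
The plan is a standard Paninski--Ingster-style two-point (actually ensemble) argument combined with Lemma~\ref{chiSquaredLBLem}. First I would construct a family of $k$-histograms $\{q_\sigma\}$ that are all at $L_1$-distance exactly $2\eps$ from $U$, and then bound from above the $\chi^2$-divergence between the mixture induced by a uniformly random $\sigma$ and $U$, taken with $s$ independent samples, showing this quantity is $o(1)$ unless $s = \Omega(\sqrt{k}/\eps^2)$.

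Concretely, partition $[0,1]$ into $k$ equal intervals and group them into $k/2$ consecutive pairs. For each pair $j$, draw an independent fair sign $\sigma_j \in \{\pm 1\}$, and let $q_\sigma$ have density $1 + 2\eps \sigma_j$ on the first subinterval of pair $j$ and $1 - 2\eps\sigma_j$ on the second. Each $q_\sigma$ is visibly a $k$-histogram and a direct computation gives $\|q_\sigma - U\|_1 = 2\eps$, so $\dtv(q_\sigma, U) = \eps$. Next I would Poissonize: replace $s$ exact samples by $N \sim \Poi(s)$ samples; this costs only a constant factor and makes the bin-counts under $U$ and $q_\sigma$ independent Poissons with respective rates $s/k$ and $s(1\pm 2\eps)/k$.

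The key step is the $\chi^2$-computation. Letting $M := \E_\sigma[q_\sigma^{\Poi(s)}]$ and writing $\eps_{\sigma,i} \in \{\pm 2\eps\}$ for the deviation of the $i$-th bin of $q_\sigma$ from $U$, the standard Poisson identity
$$
\sum_{n \geq 0} \frac{\Poi(\lambda_1)(n)\,\Poi(\lambda_2)(n)}{\Poi(\lambda_0)(n)} = \exp\!\left(\frac{(\lambda_1-\lambda_0)(\lambda_2-\lambda_0)}{\lambda_0}\right)
$$
yields, by independence across bins,
$$
1 + \chi^2\!\bigl(M,\,U^{\Poi(s)}\bigr) = \E_{\sigma,\sigma'}\!\left[\prod_{i=1}^{k} \exp\!\bigl(s\, \eps_{\sigma,i}\eps_{\sigma',i}/k\bigr)\right].
$$
Because the signs $\sigma_j$ and $\sigma'_j$ of different pairs are independent, this factors over the $k/2$ pairs; within a single pair the two subintervals contribute the same sign product, so each pair gives $\E_{\sigma_j,\sigma'_j}[\exp(8 s\eps^2 \sigma_j\sigma'_j/k)] = \cosh(8s\eps^2/k)$. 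Therefore
$$
1 + \chi^2\!\bigl(M,\,U^{\Poi(s)}\bigr) = \cosh(8s\eps^2/k)^{k/2} \ls \exp(32\, s^2 \eps^4/k),
$$
using $\cosh(x)\ls e^{x^2/2}$. For a sufficiently small absolute constant $c>0$, this is $\ls 4/3$ whenever $s \ls c\sqrt{k}/\eps^2$.

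Finally I would invoke a mixture version of Lemma~\ref{chiSquaredLBLem}: any tester that accepts $U$ and rejects every $q_\sigma$ (each with probability $\gs 2/3$) also rejects a uniformly random $q_\sigma$, so by averaging, the induced distributions $M$ and $U^{\Poi(s)}$ on the sample bag would be distinguishable with the same guarantee, forcing $\chi^2\!\bigl(M, U^{\Poi(s)}\bigr) \gs 1/3$ by the exact argument of Lemma~\ref{chiSquaredLBLem} applied with $q\gets M$ and $p\gets U^{\Poi(s)}$. Combined with the upper bound above, this yields $s = \Omega(\sqrt{k}/\eps^2)$. The one subtlety to handle carefully is the passage to the mixture (essentially Le Cam's two-point method) and depoissonization; everything else reduces to the factorized Poisson $\chi^2$ calculation above, which I expect to be the only real computation.
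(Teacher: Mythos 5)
Your proposal is correct and follows essentially the same route as the paper: the identical paired-perturbation ensemble of $k$-histograms, the same second-moment bound on the $\chi^2$-type quantity comparing the mixture against $U$ over the product sample space, and the same conclusion via Lemma~\ref{chiSquaredLBLem}. The only difference is in technical execution: you Poissonize and apply the exact Poisson $\chi^2$ identity to get $\cosh(8s\eps^2/k)^{k/2}$ directly, whereas the paper keeps exactly $s$ samples, uses $\chi_{U^{\otimes s}}(p^{\otimes s},q^{\otimes s})=\chi_U(p,q)^s$, and bounds the resulting Rademacher average by a Gaussian moment-generating-function computation --- both yield the same $\exp\left(O(s^2\eps^4/k)\right)$ bound and hence the same $\Omega(\sqrt{k}/\eps^2)$ lower bound.
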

\begin{proof} 
We assume that $k$ is even. Divide $[0,1]$ into $k/2$ equally sized bins. 
Let $\mathcal{P}$ be a distribution over $k$ histograms
where in each bin either $dq=(1+\eps)dx$ on the first half and $dq=(1-\eps)dx$ on the second half of the bin,
or visa versa independently for each bin. Note that a sample from $\mathcal{P}$ is always a $k$-histogram $q$
with $\dtv(q,U)=\eps$. Let $\mathcal{P}^{\etens s}$ be the distribution on $[0,1]^s$ obtained by randomly picking a distribution $q$
from $\mathcal{P}$ and then taking $s$ independent samples from $q$.

Given that an algorithm to distinguish the uniform distribution from $k$-histograms far from it exists,
such a distribution can distinguish a single sample from $\mathcal{P}^{\etens s}$ from a sample from $U^{\etens s}$.
Therefore, by Lemma \ref{chiSquaredLBLem}, we must have that $\chi_{U^{\etens s}}(\mathcal{P}^{\etens s},\mathcal{P}^{\etens s}) \geq 4/3.$
We will now try to bound this quantity.

Note that $\mathcal{P}^{\etens s}$ is a mixture of the distributions $q^{\etens s}$ where $q$ is drawn from $\mathcal{P}$.
Therefore, by linearity of the $\chi$-metric, we have that
$$
\chi_{U^{\etens s}}(\mathcal{P}^{\etens s},\mathcal{P}^{\etens s}) = \E_{p, q \sim  \mathcal{P}} [\chi_{U^{\etens s}}(p^{\etens s}, q^{\etens s})] = 
\E_{p, q \sim  \mathcal{P}} [(\chi_{U}(p,q))^s] \;,$$
where the last equality is by noting that the corresponding integral decomposes as a product.

We now need to think about the distribution of $\chi_U(p, q)$
when $p$ and $q$ are drawn independently from $\mathcal{P}$.
We note that for each bin $B$ the quantity $\int_B \frac{dp dq}{dU}$ is either
$\frac{1+\eps^2}{k/2}$ or $\frac{1-\eps^2}{k/2}$ with equal probability and independently for each bin.
Therefore,
$$
\chi_{U^{\etens s}}(p^{\etens s}, q^{\etens s}) \sim \left(1+\frac{\eps^2}{k/2} \sum_{i=1}^{k/2} X_i \right)^s \;,
$$
where $X_i$ are i.i.d. random variables $X_i \in_u \{\pm 1\}$. Therefore,
$$
\chi_{U^{\etens s}}(\mathcal{P}^{\etens s},\mathcal{P}^{\etens s}) = \E\left[\left(1+\frac{\eps^2}{k/2} \sum_{i=1}^{k/2} X_i \right)^s \right] \;.
$$
To bound this quantity, we use the fact that, for each $t$, the $t^{th}$ moment of a Rademacher random variable
is less than or equal to the corresponding moment of the standard Gaussian. We thus have that
$$
\chi_{U^{\etens s}}(\mathcal{P}^{\etens s},\mathcal{P}^{\etens s}) \leq \E\left[\left(1+\frac{\eps^2}{k/2} \sum_{i=1}^{k/2} G_i \right)^s \right] \;,
$$
where the $G_i$ are i.i.d. $N(0,1)$ random variables. We can bound this latter quantity as follows:
\begin{align*}
\chi_{U^{\etens s}}(\mathcal{P}^{\etens s},\mathcal{P}^{\etens s}) & \leq \E\left[\left(1+\frac{\eps^2}{\sqrt{k/2}} N(0,1) \right)^s \right]\\
& \leq \E\left[ \exp\left(\left(\frac{s\eps^2}{\sqrt{k/2}}\right) N(0,1)  \right) \right]\\
& = \exp\left(\left(\frac{s\eps^2}{\sqrt{k/2}}\right)^2/2 \right) \;.
\end{align*}
Hence, a testing algorithm can only exist when
$$
\left(\frac{s\eps^2}{\sqrt{k}}\right) \geq \sqrt{\log(4/3)} \;,
$$
or equivalently when $s = \Omega(\sqrt{k}/\eps^2).$
This completes the proof of Proposition~\ref{basicLBProp}.
\end{proof}

\subsection{First Attempt: Basic Multidimensional Lower Bound} \label{ssec:lb-d-first}
In this subsection, we build on the univariate construction of the previous
subsection to obtain a slightly improved lower bound in $d$ dimensions. 
We achieve this by modifying our ensemble in order to force any testing algorithm
to guess the dimensions of the rectangles involved in the partition.
Specifically, we prove the following:

\begin{proposition}\label{IntermediateLBProp}
If there exists an algorithm that, given $s$ independent samples from a $k$-histogram, $q$, on $[0,1]^d$
with $k>4^d$, accepts with at least $2/3$ probability if $p=U$ and rejects with at least $2/3$ probability
if $\dtv(p,U)\geq \eps$, then $s = \Omega(\eps^{-2}\sqrt{kd/2^d\log(\log(k-d)/d)})$.
\end{proposition}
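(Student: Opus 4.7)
The plan is to generalize the univariate adversary of Proposition~\ref{basicLBProp} by exploiting the fact that, for $d \geq 2$, a grid of $k$ cells in $[0,1]^d$ can have many distinct aspect ratios. Set $M_0 := \lfloor \log_2 k \rfloor$; the hypothesis $k > 4^d$ gives $M_0 \geq 2d$, so the set of \emph{shape vectors}
$$
\mathcal{A} := \{\alpha = (a_1,\ldots,a_d) \in \Z_+^d : a_i \geq 1,\ \textstyle\sum_{i=1}^d a_i = M_0\}
$$
has cardinality $|\mathcal{A}| = \binom{M_0-1}{d-1} = \Omega((\log(k-d)/d)^{d-1})$. For each $\alpha \in \mathcal{A}$, I would partition $[0,1]^d$ into a dyadic grid with $2^{a_i}$ equal intervals along coordinate $i$, producing $2^{M_0}\leq k$ sub-cells organised into $k/2^d$ super-cells (each a $2\times\cdots\times 2$ block of sub-cells). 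A shape-$\alpha$ sample $q_\alpha$ is drawn by assigning an independent Rademacher $\sigma_S \in \{\pm 1\}$ to each super-cell $S$ and placing density $1 + \sigma_S (-1)^{|\delta|}\eps$ on the sub-cell indexed by $\delta \in \{0,1\}^d$ inside $S$. Every such $q_\alpha$ is a $k$-histogram with $\dtv(q_\alpha, U) = \eps/2$ (the factor of two is absorbed by rescaling). The adversary $\mathcal{P}$ draws $\alpha \sim \mathrm{unif}(\mathcal{A})$ and then $q \sim q_\alpha$.

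The key structural observation is that different shapes are \emph{perfectly} uncorrelated. Writing $q_\alpha - U = \eps \sum_S \sigma_S \prod_{i=1}^d h^{(\alpha)}_{S_i}(x_i)$, where $h^{(\alpha)}_{S_i}$ is the $\pm 1$-valued one-dimensional Haar-type wavelet supported on the $S_i$-th super-cell interval in coordinate $i$ (of width $2^{-a_i+1}$), Fubini factorises $\int (p-U)(q-U)\,dx$ into a product of one-dimensional integrals, one per coordinate. For $\alpha \neq \beta$ some coordinate $i$ has $a_i \neq b_i$, and the standard orthogonality of dyadic Haar wavelets at distinct scales in $L^2([0,1])$ forces the $i$-th factor to vanish. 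Hence $\chi_U(p, q) = 1$ \emph{deterministically} whenever $p$ and $q$ have different shapes.

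For the same-shape calculation, with $p, q$ both drawn from $q_\alpha$ the Fubini product gives $\prod_i \int h^{(\alpha)}_{S_i} h^{(\alpha)}_{T_i}\,dx_i = (2^d/k)\mathbf{1}[S=T]$, so
$$
\chi_U(p, q) = 1 + \frac{\eps^2 \cdot 2^d}{k}\sum_S X_S, \qquad X_S := \sigma_S\tau_S \in \{\pm 1\}\ \text{i.i.d.},
$$
with $k/2^d$ summands. Applying $(1+x)^s \leq e^{sx}$ followed by $\cosh(t)^n \leq e^{nt^2/2}$ exactly as in Proposition~\ref{basicLBProp} yields $\E[\chi_U(p,q)^s \mid \text{same shape}] \leq \exp(s^2 \eps^4 \cdot 2^{d-1}/k)$. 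Using $\chi_{U^{\etens s}}(\mathcal{P}^{\etens s}, \mathcal{P}^{\etens s}) = \E_{p,q\sim\mathcal{P}}[\chi_U(p,q)^s]$ and conditioning on whether the shapes of $p$ and $q$ coincide,
$$
\chi_{U^{\etens s}}(\mathcal{P}^{\etens s},\mathcal{P}^{\etens s}) \leq \tfrac{1}{|\mathcal{A}|}\exp\!\bigl(s^2 \eps^4 \cdot 2^{d-1}/k\bigr) + \bigl(1 - \tfrac{1}{|\mathcal{A}|}\bigr).
$$
Lemma~\ref{chiSquaredLBLem} then requires this to be at least $4/3$, equivalently $\exp(s^2 \eps^4\, 2^{d-1}/k) \geq 1 + |\mathcal{A}|/3$, so $s^2 \eps^4 \geq (k/2^{d-1}) \log(|\mathcal{A}|/3)$. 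Plugging in $\log|\mathcal{A}| \geq (d-1)\log((M_0-1)/(d-1)) = \Omega(d \log(\log(k-d)/d))$ delivers the claimed lower bound.

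The principal obstacle is setting up the tensor-Haar-wavelet decomposition carefully enough that the product structure of the checkerboards factors cleanly across coordinates in both the agreeing-shape and disagreeing-shape calculations, and that the dyadic alignment of the grids across all shapes forces the cross-scale one-dimensional integrals to vanish exactly. Once that structural fact is established, the chi-square moment estimate and the reduction via Lemma~\ref{chiSquaredLBLem} are essentially verbatim repetitions of Proposition~\ref{basicLBProp}; the only new phenomenon is the $1/|\mathcal{A}|$ dilution from the orthogonal shape ensemble, which produces the multiplicative $\sqrt{\log|\mathcal{A}|}$ improvement over the univariate bound.
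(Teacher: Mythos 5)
Your proposal is correct and follows essentially the same route as the paper: an ensemble of checkerboard histograms indexed by $\binom{m+d-1}{d-1}$ grid-shape vectors, exact decorrelation ($\chi_U(p,q)=1$) across distinct shapes, a sub-Gaussian moment bound for matching shapes, and the $1/|\mathcal{A}|$ dilution feeding into Lemma~\ref{chiSquaredLBLem}. The only differences are presentational --- you phrase the cross-shape orthogonality via tensor Haar wavelets at distinct dyadic scales (the paper averages $dp/dU$ along a line through the intersection of bins) and bound the moment with $\cosh(t)^n\le e^{nt^2/2}$ rather than Gaussian domination --- neither of which changes the argument.
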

\begin{proof}
\begin{figure}
    \centering
        \includegraphics[width=0.3\textwidth]{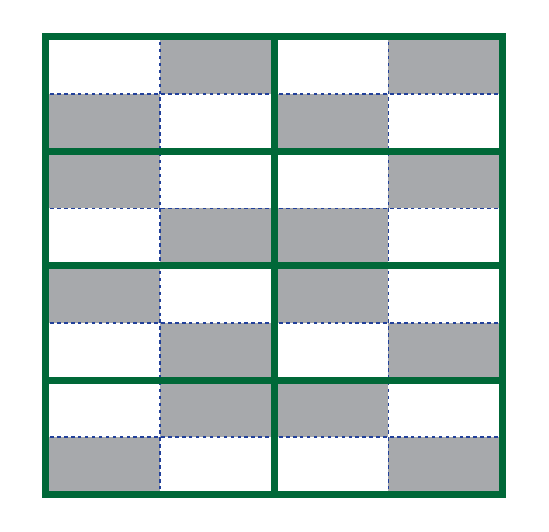}
\caption{An example of a distribution from $\mathcal{P}$. The dark cells have density $1+\eps$, and the light cells have density $1-\eps$. The green lines separate the square into a $4\times 2$ grid, and each rectangle is filled with a random $2\times 2$ checkerboard.} \label{fig:lb-grid}
\end{figure}
We first assume that $k$ is a power of $2$, namely $k=2^{m+d}$.
Since this can always be achieved by decreasing $k$ by a factor of at most $2$, this should not affect the final bound.
We define an ensemble $\mathcal{P}$ similarly to how we did so in the proof of Proposition \ref{basicLBProp}.
To define a distribution $q$ in $\mathcal{P}$, first we randomly and uniformly pick a $d$-tuple $(m_1,m_2,\ldots,m_d)$
of non-negative integers summing to $m$. We call this {\em the defining vector} of $q$. We next divide $[0,1]^d$
into $k/2$ bins by producing a \new{$\prod_{j=1}^d 2^{m_j}$} grid 
We cut each bin into $2^d$ equal sub-bins by diving it in half along each dimension. 
We divide these sub-bins into two classes based on their parity.
We then let $dq=(1+\eps)dV$ on the sub-bins of a random parity and $dq=(1-\eps)dV$ on the other sub-bins,
where the choices are independent for each bin. We note that a $q$ drawn from $\mathcal{P}$ 
is always a $k$-histogram that is $\eps$-far from the uniform distribution $U$. An illustration is given in Figure~\ref{fig:lb-grid}.

We let $\mathcal{P}^{\etens s}$ be the distribution on $([0,1]^d)^s$ obtained by drawing a random $q$ from $\mathcal{P}$
and taking $s$ independent samples from $q$. Once again, it suffices to bound from below 
$\chi_{U^{\etens s}}(\mathcal{P}^{\etens s},\mathcal{P}^{\etens s})$.
We similarly have that
$$
\chi_{U^{\etens s}}(\mathcal{P}^{\etens s},\mathcal{P}^{\etens s}) = \E_{p,q \sim  \mathcal{P}} [(\chi_{U}(p,q))^s] \;.
$$
We note that if $p$ and $q$ have the same defining vectors, then the contribution to $\chi_U(p,q)$
from each bin is randomly and independently $2(1\pm \eps^2)/k$. Therefore, by the arguments 
of the previous subsection, if we condition on $p$ and $q$ having the same defining vectors, 
the expectation of $(\chi_{U}(p,q))^{\etens s}$ is
at most $\exp\left(\left(\frac{s\eps^2}{\sqrt{k/2^d}}\right)^2 \right)$. On the other hand, if $p$ and $q$ have
different defining vectors, we claim that $\chi_U(p,q)=1$. In fact, we make the stronger claim that
if $A$ is the intersection of a defining bin of $p$ and a defining bin of $q$, then
$\int_A \frac{dp dq}{dU} = q(A)$. This is because without loss of generality we may assume that $p$'s
associated $m_1$ is smaller than $q$'s associated $m_1$. This in turn means that given any point in $A$,
the entire width of $A$ along the first axis will be in the same sub-bin for $q$, but will pass through
two sub-bins of opposite parity for $p$. Thus, the average of $dp/dU$ over this line will be $1$,
and thus the integral over $A$ of $dp dp/dU$ is the same as the integral of $dq$.

Now since there are $\binom{m+d-1}{d-1}$ different possible defining vectors, we have that
$$
\chi_{U^{\etens s}}(\mathcal{P}^{\etens s},\mathcal{P}^{\etens s}) \leq 1 + \binom{m+d-1}{d-1}^{-1} \exp\left(\left(\frac{s\eps^2}{\sqrt{k/2^d}}\right)^2 \right) \;.
$$
In order for this to be at least $4/3$, it must be the case that
$$
\left(\frac{s\eps^2}{\sqrt{k/2^d}}\right) \gg \sqrt{\log \binom{m+d-1}{d-1}} \;,
$$
or
$$
s = \Omega(\eps^{-2}\sqrt{kd/2^d\log(\log(k-d)/d)}) \;.
$$
This completes the proof of Proposition~\ref{IntermediateLBProp}.
\end{proof}

\subsection{Second Attempt: Proof of Final Sample Lower Bound} \label{ssec:lb-d-final}

Unfortunately, the lower bound of Proposition~\ref{IntermediateLBProp} only saves us a $\log\log(k)$ factor.
This is essentially because a testing algorithm only needs to correctly guess one of poly-logarithmically
many defining vectors, and once it has guessed the correct one, it only needs to see a signal large
enough that the probability of error is only inverse poly-logarithmic. This can be done by increasing
the number of samples by only a doubly logarithmic factor. In order to do better, we will need
a slightly more complicated construction, where we chop our domain into pieces and fill each piece
with rectangles, but where different pieces might have rectangles of different sizes.
\begin{theorem}\label{finalLBThm}
If there exists an algorithm that, given $s$ independent samples from a $k$-histogram, $q$, on $[0,1]^d$
with $k>2^{100d}$, accepts with at least $2/3$ probability if $q=U$, and rejects with at least $2/3$ probability
if $\dtv(p,U)\geq \eps$, then $s = (\sqrt{k}/\eps^2) \cdot \Omega(\log(k)/d)^{d-1}$.
\end{theorem}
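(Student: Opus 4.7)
The plan is to apply Lemma~\ref{chiSquaredLBLem} in the Poissonized sampling model (draw $s \sim \Poi(m)$ samples), which gives the clean identity $\E_{s}[\chi_U(p,q)^{s}] = \exp(m(\chi_U(p,q)-1))$. I would construct an ensemble $\mathcal{P}$ of $k$-histograms on $[0,1]^d$, each at total variation distance exactly $\eps$ from the uniform distribution $U$, and show that $\chi_{U^{\etens \Poi(m)}}(\mathcal{P}^{\etens \Poi(m)}, \mathcal{P}^{\etens \Poi(m)}) < 4/3$ whenever $m$ is below the target bound; by the lemma, this precludes any tester. Following the overview's sketch, the ensemble is built in two layers. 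First, partition $[0,1]^d$ into $N$ equal axis-aligned outer sub-rectangles of volume $1/N$ (with $N$ chosen in the $\polylog(k)$ range, matching the local ``defining vector'' count $N_{\mathrm{vec}}$ below). Inside each sub-rectangle $SR$, independently apply the construction of Proposition~\ref{IntermediateLBProp}: uniformly draw a local defining vector $\vec{m}^{(SR)} \in \Z_{\geq 0}^d$ with $\sum_{j} m_j^{(SR)} = \log_2(k/(N \cdot 2^d))$; form the induced $\prod_j 2^{m_j^{(SR)}}$ bin-grid on $SR$; and place a random $\pm\eps$ checkerboard with independent signs in each bin. Each $p \in \mathcal{P}$ is almost surely a $k$-histogram with $\|p - U\|_1 = \eps$ and has uniform marginals over the outer partition.

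The heart of the argument is a factorization of $\chi^2$ across sub-regions. Independence of the constructions across distinct sub-rectangles, combined with Poisson thinning (applicable precisely because every $p \in \mathcal{P}$ has uniform marginals over the outer partition), decouples the sample counts and makes the summands in $\chi_U(p,q) - 1 = \sum_{SR}(\chi_{U,SR}(p,q) - 1/N)$ mutually independent. Hence
\[
\chi^2 \;=\; \E_{p,q}[\exp(m(\chi_U(p,q) - 1))] \;=\; \prod_{SR} \E\bigl[\exp\bigl(m(\chi_{U,SR}(p,q) - 1/N)\bigr)\bigr].
\]
The key structural lemma to prove per sub-region is the local version of the mismatch cancellation of Proposition~\ref{IntermediateLBProp}: if the local defining vectors $\vec{m}^{(SR)}_p$ and $\vec{m}^{(SR)}_q$ disagree, then $\chi_{U,SR}(p,q) = 1/N$ exactly. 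I would prove this by picking an axis $j$ with $m_p^j < m_q^j$ and verifying that, along axis $j$ inside any fixed $p$-sub-bin, the $q$-sign pattern realizes an integer number of periods of a zero-mean square wave, so the cross-term in $\int_{SR} dp\,dq/dU$ averages away and only the uniform contribution $1/N$ survives. In the complementary event (local vectors match, which occurs with probability $1/N_{\mathrm{vec}}$ where $N_{\mathrm{vec}} = \binom{\log_2(k/(N\cdot 2^d)) + d - 1}{d-1} \asymp (\log(k)/d)^{d-1}$), $\chi_{U,SR}(p,q) - 1/N$ is an $\eps^2$-scaled sum of i.i.d.\ Rademacher variables over the $k/(N 2^d)$ sub-bins, whose MGF is at most $\exp(\Theta(2^d m^2\eps^4/(kN)))$ by the standard sub-Gaussian (or $\cosh$) bound.

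Combining these bounds yields each per-sub-region factor $\chi^2_{SR} \leq 1 + N_{\mathrm{vec}}^{-1}(\exp(T) - 1)$ with $T := \Theta(2^d m^2\eps^4/(kN))$, and multiplying over the $N$ sub-regions gives $\chi^2 \leq \exp(N N_{\mathrm{vec}}^{-1}(e^T - 1))$. Requiring $\chi^2 < 4/3$, and tuning $N$ to sit at the boundary between the ``$T$ small'' (linearized) and ``$T$ large'' (exponential-dominated) regimes of this bound, extracts the target lower bound $(\sqrt{k}/\eps^2)\cdot \Omega(\log(k)/d)^{d-1}$; here the hypothesis $k \geq 2^{100 d}$ is used to ensure $\log(k/N) \approx \log k$, so that $N_{\mathrm{vec}}$ captures the full $(\log(k)/d)^{d-1}$ combinatorial factor. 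The main obstacle is the per-sub-region scale-mismatch cancellation, since the analogous step in Proposition~\ref{IntermediateLBProp} was written globally via intersections of defining bins, and one must redo the axis-averaging argument locally while correctly tracking how the $p$- and $q$-induced periodic parity patterns interact inside a single sub-rectangle. A secondary difficulty is the choice of $N$: too small and the product over sub-regions loses its power, too large and $N_{\mathrm{vec}}$ shrinks, so one has to walk a careful line in the two-regime analysis of $N N_{\mathrm{vec}}^{-1}(e^T - 1)$ to retain the full polylogarithmic factor outside the square root.
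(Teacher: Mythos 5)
Your proposal follows essentially the same route as the paper's proof: the same two-level ensemble (a partition into $N \asymp \binom{m+d-1}{d-1}$ outer boxes, each independently filled with a rescaled copy of the checkerboard-grid ensemble of Proposition~\ref{IntermediateLBProp}), the same exact-cancellation claim $\chi_{U,SR}(p,q)=1/N$ for mismatched defining vectors via axis-averaging, and the same per-box bound $1+\alpha(e^{T}-1)$ with $T=\Theta(2^{d}s^{2}\eps^{4}/(kN))$ multiplied over boxes, with your explicit Poissonization being only a cosmetic variant of the paper's direct comparison of $(1+x)^{s}$ with $e^{sx}$ after Gaussianizing the Rademachers. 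The one caveat is that your final ``tuning'' step is exactly as loose as the paper's own last line: the requirement $T\lesssim 1$ yields $s=\Omega(\sqrt{kN/2^{d}}/\eps^{2})$, which with $N\asymp(\log k/d)^{d-1}$ gives the factor $(\log k/d)^{(d-1)/2}$ rather than the stated $(\log k/d)^{d-1}$ --- an arithmetic gap you inherit from the paper rather than introduce.
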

\begin{proof}

\begin{figure}
    \centering
        \includegraphics[width=0.3\textwidth]{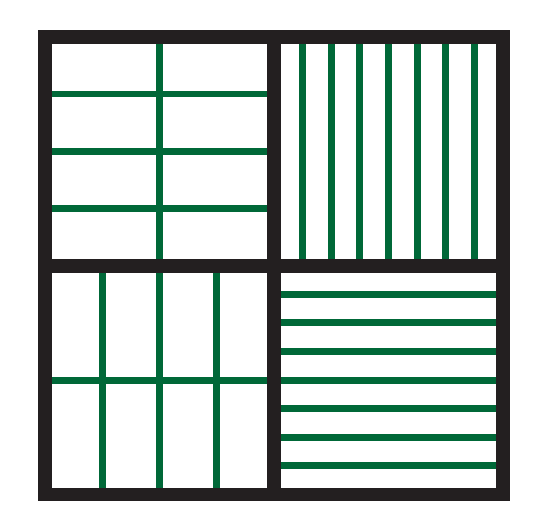}
\caption{An example of a probability distribution from ensemble $\mathcal{Q}$. The square is divided into $n=4$ regions by the black lines. Each sub-square is divided into a randomly sized grid of $2^m=8$ equal rectangles by the green lines. To get the final distribution, each of those rectangles should be filled with a random checkerboard as in Figure~\ref{fig:lb-grid}.}
\label{fig:lb-grid-strong}
\end{figure}

We first assume that $k$ can be written in the form $k=n2^{m+d}$,
where $n \leq \binom{m+d-1}{d-1}/4$. We note that (perhaps decreasing $k$ by a constant factor)
we can achieve this with $n = \Omega(\log(k)/d)^d$, and therefore we can assume this throughout the rest of the argument.

We describe a new ensemble $\mathcal{Q}$ over $k$-histograms on $[0,1]^d$ in the following way:
First, divide $[0,1]^d$ into $n$ equal volume boxes in some arbitrary way.
For each box $B_i$, pick a member $p_i$ from $\mathcal{P}$, the ensemble from the proof of
Proposition \ref{IntermediateLBProp}, independently for different $i$. We let the restriction of $q$ to $B_i$
be $p_i$ rescaled such that it assigns $B_i$ total mass $1/n$, and such that the domain of definition is
$B_i$, rather than $[0,1]^d$. An example element of $\mathcal{Q}$ is illustrated in Figure~\ref{fig:lb-grid-strong}.

Similarly, it suffices to show that if $s$ is below our desired sample lower bound then
$$
\chi_{U^{\etens s}}(\mathcal{Q}^{\etens s},\mathcal{Q}^{\etens s}) = \E_{p,q\sim  \mathcal{Q}} [(\chi_{U}(p,q))^s]
$$
is less than $4/3$.

We note that for $p$ and $q$ drawn from $\mathcal{Q}$ the quantity
$\int_{B_i} \frac{dp dq}{dU}$ is distributed as $\chi_U(p',q')/n$ with $p'$ and $q'$
drawn from $\mathcal{P}$. This is $1/n$ except with probability $\alpha:=\binom{m+d-1}{d-1}^{-1}$
and otherwise is distributed as $1/n+\frac{\eps^2}{n 2^m}\sum_{j=1}^{2^m}X_{ij}$,
where the $X_{ij}$ are i.i.d. $\{\pm 1\}$ random variables. Notice that these are independent
for different $i$ and sum to $\chi_U(p,q)$. Therefore,
$$
\chi_U(p,q) \sim 1 + \sum_{i=1}^n Y_i \left(\frac{\eps^2}{n 2^m}\sum_{j=1}^{2^m}X_{ij} \right) \;,
$$
where the $Y_i$ are i.i.d., equal to $1$ with probability $\alpha$, and $0$ otherwise.
Therefore, we have that
$$
\chi_{U^{\etens s}}(\mathcal{Q}^{\etens s},\mathcal{Q}^{\etens s}) = \E\left[ \left(1 + \sum_{i=1}^n Y_i \left(\frac{\eps^2}{n 2^m}\sum_{j=1}^{2^m}X_{ij} \right) \right)^s\right] \;.
$$
Once again, this expectation is only increased if the $X_{ij}$ are replaced by standard Gaussians, and so 
$\chi_{U^{\etens s}}(\mathcal{Q}^{\etens s},\mathcal{Q}^{\etens s})$ is at most
$$
\E\left[ \left(1 + \sum_{i=1}^n Y_i \left(\frac{\eps^2}{n 2^{m/2}}G_i \right) \right)^s\right]
$$
with $G_i$ i.i.d. standard normals. Noting that we still have a sum of 
$\sum_{i=1}^n Y_i \sim \mathrm{Binomial}(n,\alpha)$ many independent Gaussians,
this simplifies to
\begin{align*}
\chi_{U^{\etens s}}(\mathcal{Q}^{\etens s},\mathcal{Q}^{\etens s}) & \leq \E\left[ \left(1 + \left(\frac{\eps^2\sqrt{\mathrm{Binomial}(n,\alpha)}}{n 2^{m/2}}N(0,1) \right) \right)^s\right]\\
& \leq \E\left[\exp\left(\left(\frac{s\eps^2\sqrt{\mathrm{Binomial}(n,\alpha)}}{n 2^{m/2}} \right)^2/2 \right) \right]\\
& = \E\left[\exp\left(\mathrm{Binomial}(n,\alpha)\left(\frac{s^2\eps^4}{2 n^2 2^{m}} \right) \right) \right]\\
& \leq \left(1+\alpha \exp\left(\frac{s^2\eps^4}{2 n^2 2^{m}} \right) \right)^n\\
& \leq \exp\left(n\alpha\exp\left(\frac{s^2\eps^4}{2 n^2 2^{m}} \right) \right)\\
& \leq \exp\left(\exp\left(\frac{s^2\eps^4}{2 n^2 2^{m}} \right)/4 \right) \;.
\end{align*}
In order for this to be at least $4/3$, it must be the case that
$$
\frac{s^2\eps^4}{2 n^2 2^{m}} \gg 1 \;,
$$
or equivalently that
$$
s = \Omega(2^{m/2} n /\eps^2) = \Omega(\sqrt{kn/2^d}/\eps^2) = \Omega(\log(k)/d)^d\sqrt{k}/\eps^2 \;.
$$
This completes the proof of Theorem~\ref{finalLBThm}.
\end{proof}

\begin{remark} \label{rem:dis-lb}
{\em We note that our lower bounds for uniformity testing of histograms on $[0,1]^d$ 
can be made to work for histograms on $[m]^d$, assuming that $m \gg k$. 
In particular, our lower bound construction requires first dividing our domain 
into $n$ equal boxes, and then subdividing each of these boxes into $k/n$ equal boxes 
in such a way that the number of subdivisions in each dimension is a power of $2$.
For simplicity, let us assume that $n$ is a power of $d$. In that case, we
can first cut each edge of our original box into $n^{1/d}$ equal pieces
and then further subdivide each side into $k/n$ equal pieces. We note
that all of the histograms in our adversarial family are consistent
with this partition of our cube into fewer than $k^d$ boxes. Therefore,
by the inverse of the reduction above, our lower bound can be
made to work on $[k]^d$ rather than $[0,1]^d$. A more elaborate construction 
can show that our lower bounds apply for domain $[m]^d$ for any $m \gg k$.}
\end{remark}

\section{Conclusions and Future Directions} \label{sec:conc}

In this work, we gave a computationally efficient and sample near-optimal algorithm
for the problem of testing the identity of multidimensional histogram distributions \new{in any fixed dimension}. 
Our nearly matching upper and lower bounds have interesting consequences regarding the relation of learning 
and identity testing for this important nonparametric family of distributions.

A natural direction for future work is to generalize our results to the problem of testing equivalence 
between two unknown multidimensional histograms. 
The one-dimensional version of this problem
was resolved in~\cite{DKN:15:FOCS, DKN17}.
Additional ideas are required for this setting, as the algorithm and analysis in this work 
exploit the a priori knowledge of the explicit distribution. 

Another direction for future work concerns characterizing the sample and computationally
complexity of identity testing $d$-dimensional $k$-histograms when the dimension $d$ and the number
of rectangles $k$ are comparable, e.g., $k = \poly(d)$ or even $k<d$. We believe that understanding 
these parameter regimes requires different ideas.


\bibliographystyle{alpha}
\bibliography{allrefs}

\appendix 
\section*{Appendix}

\section{Proof of Theorem~\ref{thm:l1k}} \label{app:l1k}
We use the flattening method developed in \cite{DK16}. 
We begin by giving the definition of a split distribution from that work:

\begin{definition} \label{def:split-distr}
Given a distribution $p$ on $[n]$ and a multiset $S$ of elements of $[n]$, 
define the \emph{split distribution} $p_S$ on $[n+|S|]$ as follows:
For $1\leq i\leq n$, let $a_i$ denote $1$ plus the number of elements of $S$ that are equal to $i$.
Thus, $\sum_{i=1}^n a_i = n+|S|.$ We can therefore associate the elements of $[n+|S|]$ to elements of the set
$B=\{(i,j):i\in [n], 1\leq j \leq a_i\}$.
We now define a distribution $p_S$ with support $B$, by letting a random sample from $p_S$ be given by $(i,j)$,
where $i$ is drawn randomly from $p$ and $j$ is drawn randomly from $[a_i]$.
\end{definition}

We recall a basic fact about split distributions:
\begin{fact}[Fact~2.5, \cite{DK16}]\label{splitDistributionFactsLem}
Let $p$ and $q$ be probability distributions on $[n]$, and $S$ be a given multiset of $[n]$. Then:
(i) We can simulate a sample from $p_S$ or $q_S$ by taking a single sample from $p$ or $q$, respectively.
(ii) It holds $\|p_S-q_S\|_1 = \|p-q\|_1$.
\end{fact}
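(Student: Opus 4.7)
The plan is to derive both claims directly from Definition~\ref{def:split-distr}, since each essentially unpacks how $p_S$ is constructed from $p$. The main points are that (a) the split distribution is by definition the law of a two-stage sampling procedure whose second stage depends only on the (publicly known) multiset $S$, and (b) after splitting, the probability mass of the original point $i$ is partitioned equally among the $a_i$ new points $(i,1),\ldots,(i,a_i)$, so differences of masses get rescaled by $1/a_i$ while summing contributes back a factor of $a_i$. No approximation or distributional argument is needed; the whole thing is bookkeeping. There is no serious obstacle, but I will be careful to (i) note that the $a_i$'s are computable from $S$ alone, which is what makes the simulation possible with a single sample, and (ii) handle the sum over the index set $B$ carefully in the $L_1$ calculation.

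For part (i), I will describe the explicit simulator. Given a multiset $S$, compute $a_i = 1 + |\{s\in S : s = i\}|$ for each $i \in [n]$ in preprocessing. To produce one sample from $p_S$, draw a single sample $i \sim p$, then independently draw $j$ uniformly from $[a_i]$ using internal randomness, and output $(i,j) \in B$. By Definition~\ref{def:split-distr}, the law of $(i,j)$ under this procedure is precisely $p_S$. The identical argument, with $p$ replaced by $q$, simulates $q_S$ from one sample of $q$. This uses no knowledge of $p$ or $q$ beyond the single sample.

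For part (ii), I will compute $\|p_S-q_S\|_1$ pointwise on $B$. From the sampling description, for each $(i,j) \in B$,
\[
p_S(i,j) \;=\; p(i)\cdot \frac{1}{a_i}, \qquad q_S(i,j) \;=\; q(i)\cdot \frac{1}{a_i}.
\]
Therefore
\[
\|p_S - q_S\|_1 \;=\; \sum_{i=1}^n \sum_{j=1}^{a_i} \left| \frac{p(i)}{a_i} - \frac{q(i)}{a_i} \right| \;=\; \sum_{i=1}^n a_i \cdot \frac{|p(i)-q(i)|}{a_i} \;=\; \sum_{i=1}^n |p(i) - q(i)| \;=\; \|p-q\|_1,
\]
which is the claimed identity. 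Together with (i), this completes the proof of Fact~\ref{splitDistributionFactsLem}.
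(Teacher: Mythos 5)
Your proof is correct, and since the paper cites this as Fact~2.5 of \cite{DK16} without reproving it, your argument fills in exactly what was omitted. Both parts follow by unpacking Definition~\ref{def:split-distr} in precisely the way you describe: the second sampling stage is determined by $S$ alone (hence can be carried out with private coins given a single sample from $p$ or $q$), and the pointwise identity $p_S(i,j) = p(i)/a_i$ makes the $L_1$ computation a one-line telescoping sum. This is the same bookkeeping argument that appears in \cite{DK16}.
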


We also recall an optimal $\ell_2$-closeness tester under the promise that one of the 
distributions has small $\ell_2$-norm:
\begin{lemma}[\cite{CDVV14}] \label{L2TestLem}
Let $p$ and $q$ be two unknown distributions on $[n]$.
There exists an algorithm that on input $n$,  $b \geq \min \{\|p\|_2, \|q\|_2 \}$
and $0< \eps < \sqrt{2}b$, 
draws $O(b/\eps^2)$ samples
from each of $p$ and $q$ and, with probability at least $2/3$,
distinguishes between the cases that $p=q$ and $\|p-q\|_2 > \eps.$
\end{lemma}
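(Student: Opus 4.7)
The plan is to construct the standard Poissonized unbiased estimator for $\|p-q\|_2^2$ and apply Chebyshev's inequality. I would draw $\Poi(m)$ samples from each of $p$ and $q$ independently, for $m = C b/\eps^2$ with $C$ a sufficiently large absolute constant. Let $X_i$ and $Y_i$ denote the number of times element $i \in [n]$ appears in the samples from $p$ and $q$, respectively. By Poissonization, $X_i \sim \Poi(mp_i)$ and $Y_i \sim \Poi(mq_i)$ with all $2n$ counts mutually independent. I would then consider the statistic $Z = \sum_{i=1}^n \bigl[(X_i-Y_i)^2 - X_i - Y_i\bigr]$, which a direct Poisson-moment calculation shows satisfies $\E[Z] = m^2 \|p-q\|_2^2$. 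The tester compares $Z$ to the threshold $\tau = m^2 \eps^2/2$ and accepts iff $Z \le \tau$; correctness then reduces, via Chebyshev's inequality, to controlling $\var[Z]$ under both hypotheses.

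The central computation is the variance bound. By independence across $i$ together with a standard Poisson moment calculation, $\var[(X_i-Y_i)^2 - X_i - Y_i] = 2(mp_i+mq_i)^2 + 4(mp_i-mq_i)^2(mp_i+mq_i)$, so $\var[Z] \le 2m^2 \|p+q\|_2^2 + 4m^3 \sum_i (p_i-q_i)^2(p_i+q_i)$. The cross-term can be handled via $\|p-q\|_4 \le \|p-q\|_2$ and Cauchy--Schwarz: $\sum_i (p_i-q_i)^2(p_i+q_i) \le \|p-q\|_4^2 \cdot \|p+q\|_2 \le \|p-q\|_2^2 (\|p\|_2+\|q\|_2)$, yielding the clean bound $\var[Z] = O\bigl(m^2(\|p\|_2+\|q\|_2)^2 + m^3 (\|p\|_2+\|q\|_2)\|p-q\|_2^2\bigr)$.

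The last step is to handle the asymmetry in the hypothesis $b \ge \min\{\|p\|_2, \|q\|_2\}$. Since the statistic is symmetric in the two samples, I may assume without loss of generality that $\|q\|_2 \le b$. In the null case $p=q$, one has $\|p\|_2 = \|q\|_2 \le b$ and the variance bound becomes $O(m^2 b^2) = O(\tau^2/C)$, so Chebyshev gives acceptance with probability $\ge 2/3$ for $C$ large enough. In the alternative case $\|p-q\|_2 > \eps$, the triangle inequality gives $\|p\|_2 \le b + \|p-q\|_2$, and splitting on whether $\|p-q\|_2 \le b$ or $\|p-q\|_2 > b$ shows (using the hypothesis $\eps < \sqrt{2}\,b$) that the standard deviation of $Z$ is always much smaller than $\max(\tau, \E[Z]/4) = \Omega(m^2 \max(\eps^2, \|p-q\|_2^2))$, which is precisely what Chebyshev requires to put $Z$ above $\tau$ with probability $\ge 2/3$. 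The main obstacle is exactly this asymmetry between the known $\min$-bound $b$ and the unknown larger norm in the alternative regime; it is resolved by the triangle-inequality step together with $\eps < \sqrt{2}\,b$, which together force either $\|p-q\|_2 = O(b)$ (so that the $b$-calibrated noise bound already suffices) or $\|p-q\|_2 \gg b$ (so that the signal $\E[Z]$ swamps whatever additional variance arises from a potentially large $\|p\|_2$).
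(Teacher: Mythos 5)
The paper does not prove Lemma~\ref{L2TestLem} itself --- it is imported directly from \cite{CDVV14}. Your proposal reconstructs essentially the proof given in that reference: the Poissonized statistic $Z=\sum_i[(X_i-Y_i)^2-X_i-Y_i]$ with $\E[Z]=m^2\|p-q\|_2^2$, the variance bound $\var[Z]=O\bigl(m^2(\|p\|_2+\|q\|_2)^2+m^3(\|p\|_2+\|q\|_2)\|p-q\|_2^2\bigr)$, and a Chebyshev threshold at $\tau=m^2\eps^2/2$. Your handling of the asymmetric hypothesis $b\geq\min\{\|p\|_2,\|q\|_2\}$ is also the right one: the statistic is symmetric, so WLOG $\|q\|_2\le b$; under the null both norms are $\le b$; under the alternative you bound $\|p\|_2\le b+\|p-q\|_2$ by the triangle inequality and split into $\|p-q\|_2\lesssim b$ (noise still calibrated by $b$) versus $\|p-q\|_2\gg b$, where $\eps<\sqrt{2}b$ guarantees $m\|p-q\|_2=\Omega(C)$ so the signal dominates. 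I checked the case analysis and it is sound. The only routine omission is de-Poissonization of the $\Poi(m)$ sample count, which is standard. In short: correct, and the same route as the cited source.
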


We now have all the necessary tools to describe and analyze our $\ell_1^k$-identity tester.
The pseudo-code of our algorithm follows:

\begin{algorithm}
\caption{\texttt{$\ell_1^k$-Identity-Tester}}
\label{alg:L1k-tester}
Input: sample access to discrete distribution $q: [n] \to [0, 1]$, $k \in \Z_+$, and $\eps > 0$, 
and explicit distribution $p: [n] \to [0, 1]$.\\
Output: ``YES'' if $q = p$; ``NO'' if $\|q-p\|_{1,k} \ge \eps.$
\begin{enumerate}

\item Let $S$ be the multiset obtained by taking $\lfloor k p_i \rfloor$ copies of $i \in [n]$.

\item Use the $\ell_2$-tester of Lemma~\ref{L2TestLem} to distinguish between the cases that 
$p_S=q_S$ and $\|p_S-q_S\|_2^2 \geq \eps^2/(2k)$ and return the result.

\end{enumerate}
\end{algorithm}

\medskip

We now provide the simple analysis.
Note that $|S| \leq \sum_{i=1}^n k p_i =k$ and that $p_S$ assigns probability mass at most $1/k$
to each domain element. Therefore, we have that $\|p_S\|_2  \leq 1/\sqrt{k}.$
By Lemma~\ref{L2TestLem} --- applied for $b = 1/\sqrt{k}$ and $\eps/\sqrt{2k}$ in place of $\eps$ --- 
we obtain that the $\ell_2$-tester in Step~2 of the above pseudo-code 
requires $O(b/\eps^2)  = O(\sqrt{k}/\eps^2)$ samples from $q_S$ and $p_S$.
Since $p$ is explicitly given, so is $p_S$ and therefore we can straightforwardly generate samples from
$p_S$ for free. By Fact~\ref{splitDistributionFactsLem}, we can generate a sample from $q_S$ given a sample from $q$.
Hence, our algorithm uses $O(\sqrt{k}/\eps^2)$ samples from $q$. This completes the analysis of the
sample complexity.

We now prove correctness. If $p=q$, then by Fact~\ref{splitDistributionFactsLem} we have that 
$p_S=q_S$ and the algorithm will return ``YES'' with appropriate probability. 
On the other hand, if $\|q-p\|_{1,k} \ge \eps$, then by definition of the $\ell_1^k$ metric it follows that 
$\|p_S-q_S\|_{1,k+m}\geq \eps$, for $m \eqdef |S|$. 
Since $k+m$ elements contribute to total $\ell_1$-error at least $\eps$,
by the Cauchy-Schwarz inequality, we have that 
$$\|p_S-q_S\|_2^2 \geq \epsilon^2/(k+m) \geq \eps^2/(2k) \;,$$
where we used the fact that $m = |S| \leq k$.
Therefore, in this case, the algorithm returns ``NO'' with appropriate probability.
This completes the proof of Theorem~\ref{thm:l1k}.

\end{document}